\documentclass[11pt]{article}
\usepackage{fullpage}
\usepackage[english]{babel}
\usepackage{amsmath,amssymb,amsthm,mathtools}
\usepackage{graphicx}
\usepackage{bbm}
\usepackage{booktabs,array,multirow,longtable}
\usepackage{adjustbox,siunitx}
\usepackage{wrapfig,float}
\usepackage[font=small,labelfont=bf]{caption}
\usepackage{xcolor}
\usepackage{listings,cancel}
\usepackage{algorithm}
\usepackage{algorithmic}

\usepackage[authoryear,round]{natbib}
\usepackage{microtype}
\usepackage{xurl}
\usepackage{thm-restate}
\usepackage[unicode,colorlinks=true,linkcolor=blue,citecolor=blue,urlcolor=blue,
  bookmarksnumbered=true]{hyperref}
\usepackage[nameinlink]{cleveref}
\hypersetup{
  pdftitle={Phylodynamic inference with the bounded coalescent: a point process perspective},
  pdfauthor={Bingjing Tang, Shuangping Li, Julia A. Palacios}
}
\newcommand{\doi}[1]{doi: \href{https://doi.org/#1}{\nolinkurl{#1}}}
\allowdisplaybreaks
\graphicspath{{figures/}}

\theoremstyle{definition}

\theoremstyle{remark}
\newtheorem{remark}{Remark}
\theoremstyle{plain}
\newtheorem*{restateabc}{Proposition~\ref{prop:1}}
\newtheorem*{restateef}{Proposition~\ref{prop:2}}
\newtheorem*{restategh}{Corollary~\ref{prop:3}}
\newtheorem*{restateijk}{Lemma~\ref{lemma:4}}
\newcommand{\MainRef}[1]{\ref{#1}}
\crefname{figure}{Fig.}{Figs.}
\Crefname{figure}{Figure}{Figures}
\crefname{appsec}{appendix}{appendices}
\Crefname{appsec}{Appendix}{Appendices}

\newcommand{\tbl}[2]{\caption{#1}#2}
\newenvironment{tabnote}{\par\smallskip\begingroup\footnotesize\noindent}{\par\endgroup}
\newcommand{\T}{{\mathrm{\scriptscriptstyle T}}}

\newcommand{\Statementabc}{Let $T_{2}$ denote the TMRCA of a standard coalescent tree with $n$ tips evolving under the effective population size trajectory $N_e(t)$. Then

\begin{align*}
\text{pr}(T_2 \leq \tau \mid N_{e}(t), T_{n+1}=0) 
&= \sum_{j=1}^n r_{j,n} \,e^{-\binom{j}{2}\Lambda(\tau)},
\end{align*}
where the coefficients are defined as
\begin{align*}
r_{j,n} \coloneqq(-1)^{j-1}(2j-1)\frac{(n)_{j}}{(n-1+j)_{j}}\quad (j=1,\ldots,n),
\end{align*}
\noindent$\Lambda(\tau)\coloneqq\int_0^\tau\frac{1}{N_e(u)}du $, and $(x)_{j}\coloneqq x(x-1)\cdots (x-j+1)=\frac{x!}{(x-j)!}$.
}
\newcommand{\Statementgh}{For all integers $k = 2, \ldots, n$ and coefficients $r_{j,k}$ defined in Proposition~\MainRef{prop:1}, we have
\begin{align*}
\text{pr}\!\left(T_2 \le \tau \mid T_{k+1}=u, N_e(t)\right)=
\sum_{j=1}^k r_{j,k}\exp\left[{\binom{j}{2}\left\{\Lambda(u)-\Lambda(\tau)\right\}}\right].
\end{align*}
}
\newcommand{\Statementijk}{For all integers $k = 2, \ldots, n$, and coefficients $r_{j,k}$ defined in Proposition~\MainRef{prop:1}, the polynomial on $x\in\mathbbm{R}$ given by
\[
\sum_{j=1}^k r_{j,k}\, x^{\binom{j}{2}}
\]
admits the factorization
\[
\sum_{j=1}^k r_{j,k}\, x^{\binom{j}{2}}
=
(1-x)^{k-1}
\left(\sum_{i=0}^{M_k} a_{i,k}\, x^i\right),
\]
where
\[
M_k = \binom{k-1}{2},
\]
and the coefficients $\{a_{i,k}\}$ are defined recursively as
\[
a_{i,k}
=
\begin{cases}
\displaystyle
\sum_{s=1}^{k-1} (-1)^{s+1} \binom{k-1}{s}\, a_{i-s,k}
\;+\;
\sum_{j=1}^{k} r_{j,k}\,
\mathbbm{1}\!\left\{\tbinom{j}{2} = i\right\}\
& (i=0,\ldots,M_k),\\[6pt]
0 & \text{otherwise}.
\end{cases}
\]
}
\newcommand{\Statementef}{\begin{align*}
\cfrac{\sum_{j=1}^{k-1} r_{j,k-1}  x^{\binom{j}{2}}}{\sum_{j=1}^{k} r_{j,k} x^{\binom{j}{2}}} \leq  \frac{1}{1-x} \quad (x \in [0,1);\ k=3,\ldots,n).
\end{align*}
}

\title{Phylodynamic inference with the bounded coalescent: a point process perspective}
\author{Bingjing Tang\thanks{Department of Epidemiology and Biostatistics,
University of California San Francisco.
\mbox{\href{mailto:bingjing.tang@gmail.com}{bingjing.tang@gmail.com}}}
\and Shuangping Li\thanks{Department of Statistics and Data Science,
Yale University.
\mbox{\href{mailto:shuangping.li@yale.edu}{shuangping.li@yale.edu}}}
\and Julia A. Palacios\thanks{Department of Statistics, Stanford University.
\mbox{\href{mailto:juliapr@stanford.edu}{juliapr@stanford.edu}}}}
\date{}
\begin{document}
\maketitle

\begin{abstract}
The coalescent is a central framework in population genetics for modelling the ancestral relationships among sampled individuals through a genealogy, represented as a rooted and ranked binary tree. In this model, lineages coalesce at a rate inversely proportional to the effective population size, a time-varying quantity of primary interest.
The bounded coalescent conditions genealogies on the time to the most recent common ancestor being bounded above by a fixed time.
This model is useful in various contexts, such as phylodynamics of infectious diseases with known introduction times and single-cell lineage tracing in synthetic barcoding experiments.
To our knowledge, there is no existing tool that infers variable effective population size trajectories under the bounded coalescent. We view estimation under the bounded coalescent as equivalent to estimation of the intensity function of an inhomogeneous point process. We provide an efficient algorithm for coalescent simulation under the bounded coalescent using point process methods, retaining the exactness of naive rejection sampling while substantially reducing computational cost and avoiding repeated numerical inversion of the bounded cumulative hazard. We then develop a Markov chain Monte Carlo procedure for posterior inference of effective population size trajectories that avoids discretization of the likelihood integrals. In simulations, conditioning on the bound reduces the median sum of squared errors in two of three settings, with less favourable results in the most rapidly varying setting. We illustrate the method using severe acute respiratory syndrome coronavirus 2 sequence data from Washington State.
\end{abstract}

\begin{center}\small\textbf{Keywords:}
Bounded coalescent; Effective population size; Gaussian process; Phase-type distribution; Phylodynamics.
\end{center}

\section{Introduction}
The coalescent with variable population size \citep{kingman1982coalescent,griffiths1994sampling,Tavare2004} is a commonly used prior on the latent genealogy of a sample of individuals in Bayesian inference from molecular sequences. In this model, sampled lineages coalesce back in time at a rate inversely proportional to a quantity of interest called effective population size. The effective population size trajectory, denoted by $N_e(t)$, determines the rate of coalescence over time and provides information about population history. The bounded coalescent model is a variant of the coalescent in which the genealogy is conditioned on the time to the most recent common ancestor (TMRCA) being smaller than some known fixed time $\tau$~\citep{carson2022bounded}. This setting arises in multispecies coalescent models that jointly describe species phylogenies and gene trees. 
Specifically, when a gene duplication occurs on a branch of the species tree at time $\tau$, all descendant gene lineages  must originate from that duplication event, and thus must share a common ancestor before $\tau$ ~\citep{pamilo1988relationships,rasmussen2012unified,li2021multilocus}. Similarly, in the context of infectious diseases, when modelling within-host pathogen transmission~\citep{didelot2017genomic}, coalescence within a host must occur before coalescence across different hosts. Another infectious-disease example arises when the introduction time is known and all lineages must coalesce before that time.  Single-cell lineage tracing provides another motivating setting: when a population expands from a single founder, all sampled lineages must coalesce within the known duration of the experiment.

In~\citet{carson2022bounded}, the authors studied the bounded coalescent and showed that ignoring the bound introduces systematic bias, leading to negatively biased estimates of  effective population sizes. Although the bounded coalescent model has been used across various applications, its use  has been restricted to the case of constant effective population sizes. To the best of our knowledge, no prior work has extended the bounded coalescent model to the more general setting of a time-varying effective population size $N_e(t)$.  In this work, we develop a method for Bayesian nonparametric inference of variable effective population size trajectories under the bounded coalescent.  We first note that the distribution of the TMRCA corresponds to an inhomogeneous phase-type distribution \citep{albrecher2019inhomogeneous,horvath2024phase} that can be explicitly computed without evaluating a matrix exponential. Phase-type distributions are also used in mathematical population genetics. In~\citet{hobolth2024phase}, the authors focus on homogeneous standard coalescent models and derive distributions for key population-genetic quantities, including TMRCA, by representing them as phase-type distributions. We build on this line of work by extending the characterization of the TMRCA distribution to the inhomogeneous setting arising in the bounded coalescent. A brief overview of phase-type distributions is given in~\cref{app:appendix1}.

For Bayesian inference, we place a transformed Gaussian process prior on $N_{e}(t)$ and develop a Markov chain Monte Carlo (MCMC) algorithm for posterior exploration. Gaussian process-based priors have been used extensively in the standard coalescent. In \citet{minin_smooth_2008,palacios2012integrated,gill2013improving}, the authors place a log-Gaussian Markov random field prior on $N_{e}(t)$; however, the exact likelihood becomes intractable and it is therefore typically approximated via discretization. In \citet{Palacios2013}, the authors introduce a data augmentation scheme based on latent variables, targeting an augmented posterior over coalescent times and auxiliary event times, which yields a tractable likelihood without discretization. This strategy, however, relies on point-process thinning and does not extend directly to the bounded coalescent.
To overcome this limitation, we build on a recently proposed method for Cox processes that avoids discretization of the likelihood integrals \citep{tang2026exact}. The key idea is to model $1/N_e(t)$ and its associated integrals of the form $\int_*^* dt/N_e(t)$ jointly under a truncated Gaussian prior, thereby avoiding discretization of these quantities in the likelihood. We evaluate our method on simulated genealogies and illustrate its application to SARS-CoV-2 sequence data from Washington State. In addition to providing a method for inferring $N_{e}(t)$ from bounded genealogies, we develop a novel and efficient method for simulating genealogies under this model.

In \citet{carson2022bounded}, the authors proposed a simulation algorithm for the bounded coalescent with constant population size. The algorithm first samples the number of extant lineages at discretized time points (bins) and iteratively refines the bins until the number of lineages across consecutive bins changes by one. The algorithm then samples coalescent times by inverse transformation. They demonstrated that this direct sampling method significantly outperforms naive rejection sampling in terms of computational efficiency. In this work, we propose a novel and efficient method for simulating genealogies under the bounded coalescent with variable population size that combines time transformation and thinning. In particular, we provide a time-varying upper bound on the coalescent intensity and use it to propose events via a time transformation method~\citep{daley2003introduction}, which are then accepted by thinning. Thinning is an accept/reject algorithm that was first proposed for inhomogeneous Poisson processes \citep{lewis1979simulation}, and later extended to a
more general class of point processes \citep{ogata1981lewis}. In our simulation experiments, this method is the fastest or is tied for fastest at the reported precision among the methods compared.

\section{Bounded coalescent}
We assume that a genealogy with $n$ tips sampled at the present time (set to 0), as illustrated in~\cref{fig:1}, is observed. The coalescent time $T_{k}$ denotes the time when two of the $k$ extant lineages merge into a single ancestral lineage. Under the standard coalescent, the density of $T_{k}$, conditional on there being $k$ extant lineages at  time $t_{k+1}$, is 
\begin{align*}
&f(t_{k}\mid t_{k+1}, N_e(t))\coloneqq\cfrac{\binom{k}{2}}{N_e(t_k)} \exp\left\{-\int_{t_{k+1}}^{t_k}\cfrac{\binom{k}{2}}{N_e(s)}\,d\,s\right\},
\end{align*}
where $N_e(t)$ denotes the effective population size at
time $t$ \citep{griffiths1994sampling,Tavare2004}. Consequently, 
a full realization of the coalescent process is the ordered vector $\mathbf{t}= (t_{n},\ldots,t_{2})$ satisfying  $0<t_n<t_{n-1}<\cdots < t_2$, where the origin of the coalescent tree is set to $t_{n+1}=0$, with likelihood 
\begin{align*}
 f(\mathbf{t} \mid N_e(t), T_{n+1}=0) &=\prod_{k=n}^2 f(t_{k}\mid t_{k+1}, N_e(t)).
\end{align*}

\begin{figure}
\includegraphics[width=1.0\textwidth]{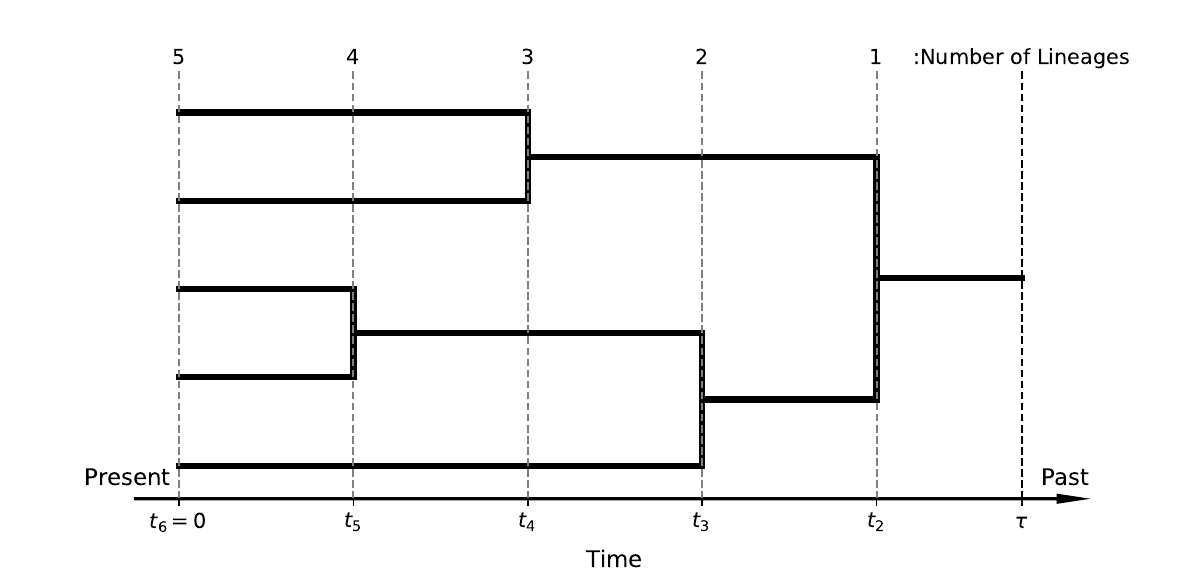}

 \caption{Example genealogy with $n=5$ tips sampled at the present time ($t_6=0$). Coalescent times are denoted by $t_{5},\ldots,t_{2}$, where
$t_k$ denotes the time at which two of the $k$ extant lineages merge into
a single lineage.}
 \label{fig:1}
\end{figure}
Under the bounded coalescent \citep{carson2022bounded}, we further assume that $T_{2}$, the time to the most recent common ancestor, is bounded by a known quantity denoted by $\tau$. Hence, the corresponding coalescent likelihood is 
\begin{align} \label{eq:likelihood}
    f(\mathbf{t} \mid N_e(t), T_{n+1}=0, T_{2}\leq \tau) =\prod_{k=n}^2 f(t_{k}\mid t_{k+1}, N_e(t))\textcolor{black}{\cfrac{\mathbbm{1}(t_2\leq \tau)}{\text{pr}(T_2\leq \tau\mid N_{e}(t), T_{n+1}=0)}}.
\end{align}

Many genealogy statistics, including the TMRCA $T_{2}$, are known to follow phase-type distributions \citep{hobolth2024phase}. When the effective population size varies through time, the distribution becomes inhomogeneous phase-type. Recall that a phase-type distribution is the law of the waiting time until absorption in a finite state space continuous-time Markov chain \citep{horvath2024phase}. In the next proposition, we show that we can avoid matrix exponential computations and calculate the bound probability $\text{pr}(T_2\leq \tau\mid N_{e}(t), T_{n+1}=0)$ as a linear combination of simpler terms. This representation reduces the computational cost of likelihood evaluation and also enables the construction of an efficient simulation algorithm.

\begin{restatable}{proposition}{abc}\label{prop:1}
\Statementabc
\end{restatable}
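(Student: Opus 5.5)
The plan is to reduce to the constant-size coalescent by a time change and then use the classical Tavaré formula for the law of the TMRCA. Under the coalescent with trajectory $N_e(t)$, the block-counting process $A(t)$ (the number of extant lineages at time $t$) is a pure-death Markov chain. Its jump rate from $k$ to $k-1$ at time $t$ is $\binom{k}{2}/N_e(t)$. Setting $s=\Lambda(t)$, the time-changed process $\tilde A(s)=A(\Lambda^{-1}(s))$ is the homogeneous Kingman death chain with rates $\binom{k}{2}$. One can check this directly from the product density displayed before \eqref{eq:likelihood}: substituting $s_k=\Lambda(t_k)$ gives Jacobian $ds_k=dt_k/N_e(t_k)$, which turns each factor into $\binom{k}{2}e^{-\binom{k}{2}(s_k-s_{k+1})}$. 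Since $\Lambda$ is increasing, $\{T_2\le\tau\}=\{\tilde T_2\le\Lambda(\tau)\}$, so it suffices to prove
\[
\text{pr}(\tilde T_2\le s)=\sum_{j=1}^n r_{j,n}e^{-\binom{j}{2}s}
\]
for the constant-size coalescent started from $n$ lineages.

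In the homogeneous case, $\tilde T_2$ is a sum of independent exponentials with distinct rates $\lambda_k=\binom{k}{2}$, $k=2,\ldots,n$. Partial fractions (the hypoexponential law) give
\[
\text{pr}(\tilde T_2> s)=\sum_{j=2}^n\Big(\prod_{k\ne j,\,2\le k\le n}\frac{\lambda_k}{\lambda_k-\lambda_j}\Big)e^{-\lambda_j s}.
\]
The $j=1$ term, with $\lambda_1=0$ and $r_{1,n}=1$ (indeed $(n)_1/(n)_1=1$), supplies the leading constant. The claim then reduces to the identity
\[
\prod_{k\ne j}\frac{\binom{k}{2}}{\binom{k}{2}-\binom{j}{2}}=(-1)^{j}(2j-1)\frac{(n)_j}{(n-1+j)_j},\qquad j\ge 2,
\]
so that $\text{pr}(\tilde T_2\le s)=1-\sum_{j\ge2}(\cdots)e^{-\lambda_j s}$ matches $r_{j,n}$, including its sign $(-1)^{j-1}$.

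The main step is this product computation. I would use $\binom{k}{2}-\binom{j}{2}=(k-j)(k+j-1)/2$, which gives
\[
\prod_{k\ne j}\frac{k(k-1)}{(k-j)(k+j-1)}.
\]
I would then split it into telescoping factorial ratios.
\begin{itemize}
\item The numerator over $k\in\{2,\ldots,n\}\setminus\{j\}$ is $n!(n-1)!/\{j(j-1)\}$.
\item $\prod_{k\ne j}(k-j)=(-1)^{j-2}(j-2)!\,(n-j)!$.
\item $\prod_{k\ne j}(k+j-1)=\dfrac{(n+j-1)!}{j!\,(2j-1)}$.
\end{itemize}
Combining these and simplifying yields
\[
(-1)^j(2j-1)\frac{n!\,(n-1)!}{(n-j)!\,(n+j-1)!}=(-1)^j(2j-1)\frac{(n)_j}{(n-1+j)_j}.
\]
The only real risk is bookkeeping of signs and factorial offsets. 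I would sanity-check the identity at $n=2$, where the product should equal $-1$ and give $r_{2,2}=-1$, so that $\text{pr}(T_2\le\tau)=1-e^{-\Lambda(\tau)}$.

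Alternatively, the same identity follows from Tavaré's (1984) formula for $\text{pr}(\tilde A(s)=1)$, which could simply be cited after the time change. I would present the self-contained partial-fraction derivation instead, and remark that it is the inhomogeneous phase-type distribution evaluated without a matrix exponential: the time change diagonalizes the generator in the time variable.
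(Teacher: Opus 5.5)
Your proof is correct, but it takes a genuinely different route from the paper's. The paper obtains $\text{pr}(T_2\le\tau)=\boldsymbol{\pi}\exp\{\Lambda(\tau)\mathbf{Q}\}\mathbf{e}_n^{\T}$ by invoking the inhomogeneous phase-type result of Albrecher and Bladt for commuting sub-generators. It then writes down explicit left and right eigenvectors of the bidiagonal generator $\mathbf{Q}$ and asserts that ``a direct calculation'' gives $\mathbf{u}_{1i}\mathbf{v}_{in}=r^*_{i,n}$.

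You differ at both stages. First, you justify the time change $s=\Lambda(t)$ elementarily, by a change of variables in the product density. Second, you replace the spectral decomposition with the hypoexponential partial-fraction formula. That formula applies because the pure-death chain visits each of the states $n,\dots,2$ exactly once, with distinct rates $\binom{k}{2}$. Your telescoping computation then verifies the coefficients completely; the three partial products and the sign $(-1)^{-(j-2)}=(-1)^j$ are all correct. This is precisely the step the paper leaves unexplained.

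Each route has its advantage. The paper's spectral expansion yields the whole matrix $\exp\{\Lambda(\tau)\mathbf{Q}\}$, that is, the law of the number of lineages at any time. It also fits the phase-type framework, which covers genealogical statistics that are not simple sums of independent exponentials. Your route is shorter and fully self-contained, but it is specific to the absorption time of this death chain. That is all the proposition and Corollary~\ref{prop:3} require.

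One small slip in your sanity check: for $n=2$, $j=2$ the product over $k\neq j$ is empty and equals $+1$. This is consistent with your own identity, since $(-1)^2\cdot 3\cdot(2)_2/(3)_2=1$. The value $r_{2,2}=-1$ comes from the leading minus sign in $1-\sum_{j\ge2}(\cdots)$, not from the product itself.
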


The result in Proposition~\ref{prop:1} is obtained by finding the spectral expansion of the generator. The coefficients in the sum are identical to those derived for the coalescent with constant population size \citep{tavare1984line}. Full proofs of all
propositions in this paper are provided in \Cref{app:appendix2}. Throughout, we
adopt the convention that $\binom{1}{2}=0$. 

\section{The bounded coalescent as a point process}\label{sec:sim}
\subsection{Conditional distributions and intensity}
We derive results used in our simulation algorithm for the bounded coalescent with a time-varying effective population size (Section \ref{sec:sim1}).

To generate coalescent times sequentially, we re-express the likelihood in \Cref{eq:likelihood} as a product of Markov bridge kernels conditioned on $T_2\leq \tau$ and $T_{n+1}=0$
as follows:
\begin{align*}
    f(\mathbf{t} \mid  T_{2}\leq \tau, T_{n+1}=0) &= \prod^{2}_{k=n} f(t_k \mid t_{k+1}, T_2\le \tau)\\
    &=\prod^{2}_{k=n} f(t_k \mid t_{k+1}) \frac{\text{pr} \left(T_{2} \leq \tau \mid T_k=t_k \right)}
     {\text{pr} \left( T_{2} \leq \tau \mid T_{k+1}=t_{k+1} \right)}\\
     &:=\prod^{2}_{k=n} f^B(t_k \mid t_{k+1}),\\
     \label{eq:cond}
\end{align*}
where we suppress $N_e(t)$ from the notation here and throughout the remainder of the paper, except in \Cref{sec:inf}; conditioning on the given effective population size trajectory remains implicit.

The general form of these Markov bridge kernels is
\begin{equation*} \label{eq:bound_cd}
f^B(t_k\mid t_{k+1})=f(t_{k}\mid t_{k+1})g_{k}(t_{k},t_{k+1},\tau)\quad (k =2, 3,\ldots, n),
\end{equation*}
where
\begin{align*}
g_{k}(t_{k}, t_{k+1}, \tau) \coloneqq \frac{\text{pr} \left(T_{2} \leq \tau \mid T_k=t_k \right)}
     {\text{pr} \left( T_{2} \leq \tau \mid T_{k+1}=t_{k+1} \right)}=\cfrac{\sum_{j=1}^{k-1} r_{j,k-1} \exp\left[\binom{j}{2}\left\{\Lambda(t_{k})-\Lambda(\tau)\right\}\right]}{\sum_{j=1}^{k} r_{j,k} \exp\left[\binom{j}{2}\left\{\Lambda(t_{k+1})-\Lambda(\tau)\right\}\right]}.
\end{align*}
The above expression follows from \Cref{prop:3} below, which is a consequence of Proposition~\ref{prop:1}.

\begin{restatable}{corollary}{gh}\label{prop:3}
\Statementgh
\end{restatable}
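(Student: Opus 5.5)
The plan is to reduce the statement to Proposition~\ref{prop:1} through the Markov property of the coalescent and a deterministic time change. Conditional on $T_{k+1}=u$, there are exactly $k$ extant lineages at time $u$. The future evolution depends only on this state and on $N_e(t)$ for $t\ge u$. Consequently, the remaining coalescent times $(T_k,\ldots,T_2)$ are distributed as the coalescent times of a standard coalescent started at time $u$ with $k$ lineages. Concretely, the joint conditional density is $\prod_{m=k}^{2} f(t_m\mid t_{m+1})$ with $t_{k+1}=u$. This follows directly from the product form of the likelihood and the definition of the kernels $f(t_m\mid t_{m+1})$.

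Next, I shift time. Define $\tilde N_e(s)\coloneqq N_e(u+s)$ for $s\ge 0$ and $\tilde T_m \coloneqq T_m-u$. Substituting $t_m=u+s_m$ in each kernel gives
\[
f(t_m\mid t_{m+1}) = \frac{\binom{m}{2}}{\tilde N_e(s_m)}\exp\left\{-\int_{s_{m+1}}^{s_m}\frac{\binom{m}{2}}{\tilde N_e(v)}\,dv\right\},
\]
which is the standard coalescent kernel under the trajectory $\tilde N_e$. Hence $(\tilde T_k,\ldots,\tilde T_2)$ is a standard coalescent with $k$ tips under $\tilde N_e$, started from $\tilde T_{k+1}=0$. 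The event $\{T_2\le\tau\}$ equals $\{\tilde T_2\le \tau-u\}$.

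Applying Proposition~\ref{prop:1} with $n$ replaced by $k$ and $N_e$ replaced by $\tilde N_e$ gives
\[
\text{pr}(T_2\le\tau\mid T_{k+1}=u)=\sum_{j=1}^k r_{j,k}\exp\left\{-\binom{j}{2}\tilde\Lambda(\tau-u)\right\},
\qquad \tilde\Lambda(\tau-u)=\int_0^{\tau-u}\frac{dv}{N_e(u+v)}=\Lambda(\tau)-\Lambda(u),
\]
which is the claimed formula. The coefficients $r_{j,k}$ depend only on the number of lineages, so they carry over unchanged.

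The only step needing care is the first: justifying that conditioning on $T_{k+1}=u$ yields exactly the standard coalescent restarted at $u$, with no dependence on earlier times $T_n,\ldots,T_{k+2}$. I will argue this by marginalizing the product-form joint density. Integrating out $t_n,\ldots,t_{k+2}$ gives the marginal density of $T_{k+1}$ times the product of the remaining kernels, so the conditional density is that product. The case $\tau<u$ needs a side remark: the formula is to be read for $\tau\ge u$, since otherwise the probability is zero. For $k=2$ the sum reduces to $1-e^{-\{\Lambda(\tau)-\Lambda(u)\}}$, which agrees with the exponential waiting time and serves as a sanity check.
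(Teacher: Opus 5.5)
Your proof is correct and follows essentially the same route as the paper: both identify the conditional probability with the TMRCA distribution of a coalescent restarted at time $u$ with $k$ lineages, then invoke Proposition~\ref{prop:1} with $n$ replaced by $k$ and $\Lambda(\tau)$ replaced by $\Lambda(\tau)-\Lambda(u)$. The paper gets there by applying the inhomogeneous phase-type formula on $[u,\tau]$. Your time shift $\tilde N_e(s)=N_e(u+s)$ instead lets you apply Proposition~\ref{prop:1} verbatim, and your marginalization argument and your remark that the formula holds only for $\tau\ge u$ make explicit two points the paper leaves implicit.
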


We now derive the conditional intensity function for the bounded coalescent point process based on its likelihood function in \Cref{eq:likelihood}. The conditional intensity function of a general point process is
\[
\lambda(t)
= \frac{f(t \mid \mathcal{H}_{t})}
       {1 - \int_{t_0}^{t} f(s \mid \mathcal{H}_{t})\, ds},
\]
where $\mathcal{H}_t$ denotes the filtration, i.e., the history of the process up to (but not including) time $t$, $t_0$ denotes the time of the most recent event before $t$, and $f(t \mid \mathcal{H}_{t})$ is the conditional density given the past history \citep{rasmussen2018lecture}. Therefore, the conditional intensity function of the $k$th coalescence is 
       \begin{align}
       \lambda^B_{k}(t)&=\frac{f^B(t \mid t_{k+1})}{1-\int_{t_{k+1}}^t f^B(s \mid t_{k+1}) ds}, \quad t\in (t_{k+1}, t_k],\notag\\
       &=\cfrac{\binom{k}{2}}{N_e(t)}g_{k}(t,t,\tau), \quad t\in (t_{k+1},t_{k}].\label{eq:intensity}
       \end{align}       
Moreover,
\begin{align}
\lim_{t \to \tau} \lambda^B_{k}(t)= \infty.\label{eq:infty} 
\end{align}
All derivations can be found in~\Cref{appendix4}.

\subsection{Sampling from a bounded coalescent point process} \label{sec:sim1}

Simulation of genealogies under the bounded coalescent is useful both for inference and for understanding how different effective population size trajectories and bounds affect the distribution of branch lengths. A naive approach is to perform rejection sampling \citep{Tavare2004,didelot2014bayesian}; although conceptually straightforward, this method becomes inefficient when the rejection probability is high. \citet{carson2022bounded} proposed a sampling algorithm for the bounded coalescent under a constant effective population size trajectory; however, this approach does not extend to time-varying effective population size trajectories. 
Two common approaches for simulating point
processes are the inverse transformation method~\citep{daley2003introduction} and the
thinning algorithm~\citep{lewis1979simulation}. Both approaches have been successfully
adapted to simulation of the standard coalescent point process
\citep{hein2004gene,Palacios2013}.
We briefly describe these two methods for the standard coalescent point process.
The inverse transformation method exploits the fact that
$$
\Lambda_{k}(T_k) - \Lambda_{k}(T_{k+1}) \sim \mathrm{Exp}(1),
$$
where $$\Lambda_{k}(t):=\binom{k}{2}\int^{t}_{0} \cfrac{1}{N_e(u)} du=\binom{k}{2}\Lambda(t).$$ Consequently, given $T_{k+1}$ and a standard exponential random variable $W$, the next
event time is obtained as
$$
T_k = \Lambda_{k}^{-1}\!\left(\Lambda_{k}(T_{k+1}) + W\right).
$$
Similar in spirit to rejection sampling, the thinning algorithm requires a constant
upper bound $\bar{\lambda}_k$ such that $$\bar{\lambda}_{k} \ge \lambda_{k}(t):=\binom{k}{2}/N_{e}(t).$$
Given $T_{k+1}$ and $\bar{\lambda}_{k}$, we initially assign $W=T_{k+1}$ and sequentially update $W=W+E_{i}$, with $E_{i}\sim \mathrm{Exp}(\bar{\lambda}_{k})$, until $T_{k}=W$ is accepted with probability $\lambda_{k}(W)/\bar{\lambda}_{k}$.

The inverse-transform method can also be applied to the bounded coalescent.
For $0\leq t<\tau$, let $h_k(t)$ denote the probability of reaching a single
lineage by $\tau$, starting with $k$ lineages at time $t$. By \Cref{prop:3},
\[
h_k(t)=\sum_{j=1}^{k}r_{j,k}
\exp\!\left[\binom{j}{2}\{\Lambda(t)-\Lambda(\tau)\}\right],
\qquad h_1(t)=1.
\]
The conditional-survival calculation in \Cref{appendix4} gives the explicit
cumulative hazard
\begin{align}
\Lambda_k^B(t)
&=\int_0^t\lambda_k^B(u)\,du\nonumber\\
&=\binom{k}{2}\Lambda(t)+\log\!\left\{\frac{h_k(0)}{h_k(t)}\right\}.
\label{eq:boundedcum}
\end{align}
Given $t_{k+1}$ and $W\sim\mathrm{Exp}(1)$, inverse-transform sampling therefore
requires solving
\[
\binom{k}{2}\{\Lambda(t_k)-\Lambda(t_{k+1})\}
+\log\!\left\{\frac{h_k(t_{k+1})}{h_k(t_k)}\right\}=W.
\]
Although the cumulative hazard is explicit in terms of $\Lambda$ and the bound
probabilities, its inverse generally requires numerical root finding. This
involves repeated evaluations of these quantities, but no additional quadrature
of $\lambda_k^B$ when $\Lambda$ can be evaluated directly. Stable evaluation of
$\log h_k(t)$ is important near the bound, where $h_k(t)$ tends to zero; the
factorization in \Cref{lemma:4} provides a way to avoid cancellation in the
spectral sum. We instead construct a thinning algorithm whose proposal times
have an explicit inverse transformation whenever $\Lambda^{-1}$ is available.

Because the bounded coalescent intensity is unbounded (see \Cref{eq:infty}), the thinning algorithm cannot directly use a
constant upper bound. Instead, we employ a thinning algorithm with a
time-varying upper bound $\lambda^{U}_{k}(t)$. The main challenge is to construct such
a $\lambda^{U}_{k}(t)$ so that the corresponding dominating point process can be simulated easily and the
acceptance ratio $\lambda^{B}_{k}(t) / \lambda^{U}_{k}(t)$ is tractable. Since the component $g_k(t,t,\tau)$ in $\lambda_k^B(t)$ can be viewed as a ratio of polynomials, the following lemma and Proposition~\ref{prop:2} on polynomials allow us to define $\lambda_{k}^{U}(t)$.

\begin{restatable}{lemma}{ijk}\label{lemma:4}
\Statementijk
\end{restatable}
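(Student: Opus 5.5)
The plan is to reduce the factorization to a single analytic fact: the polynomial $P_k(x)\coloneqq\sum_{j=1}^k r_{j,k}\,x^{\binom{j}{2}}$ has a zero of order at least $k-1$ at $x=1$. Once this is known, the rest is bookkeeping. Since $r_{k,k}\neq 0$, we have $\deg P_k=\binom{k}{2}$. Hence $P_k(x)=(1-x)^{k-1}A_k(x)$ with $A_k$ a polynomial of degree $\binom{k}{2}-(k-1)=\binom{k-1}{2}=M_k$.

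To get the recursion, write $A_k(x)=\sum_{i=0}^{M_k}a_{i,k}x^i$, expand $(1-x)^{k-1}=\sum_{s=0}^{k-1}(-1)^s\binom{k-1}{s}x^s$, and compare coefficients of $x^i$. This gives
\[
\sum_{j=1}^k r_{j,k}\mathbbm{1}\{\tbinom{j}{2}=i\}=\sum_{s=0}^{k-1}(-1)^s\tbinom{k-1}{s}a_{i-s,k}.
\]
Isolating the $s=0$ term yields exactly the stated recursion, with $a_{i,k}=0$ for $i<0$. The convention $a_{i,k}=0$ for $i>M_k$ is consistent precisely because the division is exact. So the recursion is simply long division of $P_k$ by $(1-x)^{k-1}$ starting from the constant term, and it determines the $a_{i,k}$ uniquely.

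For the vanishing order, I would use Corollary~\ref{prop:3} in the special case $N_e\equiv 1$, so that $\Lambda(t)=t$. With $s=\tau-u>0$ and $x=e^{-s}\in(0,1)$, the corollary identifies
\[
P_k(e^{-s})=\text{pr}(T_2\le\tau\mid T_{k+1}=u).
\]
This is the probability that a constant-size coalescent started from $k$ lineages completes $k-1$ coalescences within time $s$. Each inter-coalescence waiting time is exponential with rate at most $\binom{k}{2}$. The event therefore requires at least $k-1$ points of a Poisson process of rate $\binom{k}{2}$ in an interval of length $s$, which gives
\[
0\le P_k(e^{-s})\le \frac{\{\binom{k}{2}s\}^{k-1}}{(k-1)!}.
\]
Alternatively, the sum of $k-1$ independent exponentials has density $O(s^{k-2})$ near $0$. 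Since $1-e^{-s}\sim s$ as $s\to 0^+$, this shows $P_k(x)=O\{(1-x)^{k-1}\}$ as $x\to1^-$.

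Finally, expand the polynomial about $x=1$ as $P_k(x)=\sum_m c_m(x-1)^m$. If some $c_m\neq0$ with $m<k-1$, then the lowest such term would dominate as $x\to1^-$ and contradict the bound above. Hence $c_0=\cdots=c_{k-2}=0$, i.e.\ $(1-x)^{k-1}$ divides $P_k$.

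The main obstacle I expect is making the probabilistic identification airtight. One must check that Corollary~\ref{prop:3} (equivalently Proposition~\ref{prop:1}) applies with $N_e\equiv1$ for every starting time $u<\tau$. This lets us treat $P_k(x)$ as a genuine probability for all $x\in(0,1)$, which is all that is needed since the order of vanishing of a polynomial at $1$ is detected from one side. If a purely algebraic route is preferred instead, I would verify $\sum_j r_{j,k}\binom{j}{2}^{m}$-type moment identities for $m=0,\dots,k-2$ (derivatives at $1$). These follow from $\sum_j r_{j,k}\,p\{\binom{j}{2}\}=0$ for polynomials $p$ of degree $<k-1$, but I expect the probabilistic argument to be shorter.
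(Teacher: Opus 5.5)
Your proof is correct, but it establishes the key divisibility $(1-x)^{k-1}\mid P_k$ by a different route from the paper's.

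The paper's argument is purely algebraic and uses only the closed form of $r_{j,k}$. It first proves by induction the partial-sum formula $\sum_{q=1}^{j} r_{q,k}=(-1)^{j-1}j\prod_{m=1}^{j-1}(k-1-m)/(k+m)$, which at $j=k$ gives $P_k(1)=0$. It then runs a two-index induction showing $P_k^{(s)}(1)=\sum_j r_{j,k}\prod_{m=0}^{s-1}\{\binom{j}{2}-m\}=0$ for $s\le k-2$. The step from $k$ to $k+1$ uses the ratio identity $r_{j,k+1}=r_{j,k}\binom{k+1}{2}/\{\binom{k+1}{2}-\binom{j}{2}\}$. This is exactly the moment-identity alternative you mention at the end, with falling factorials of $\binom{j}{2}$ as the polynomial basis. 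The paper's last step (Taylor expansion about $1$, then ascending long division) matches your coefficient comparison, which you make more explicit than the paper does.

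You instead read off the vanishing order from the probabilistic meaning of $P_k$, via Proposition~\ref{prop:1} and Corollary~\ref{prop:3} with $N_e\equiv 1$. Then $P_k(e^{-s})$ is the distribution function of a sum of $k-1$ independent exponentials, hence $O(s^{k-1})$ as $s\to 0^{+}$. Since $1-e^{-s}\sim s$, this forces $c_0=\cdots=c_{k-2}=0$. The Poisson domination follows from the coupling $\binom{m}{2}E_m/\binom{k}{2}\le E_m$, or from your density bound.

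Your route is shorter and explains why the exponent is $k-1$: reaching the root takes $k-1$ coalescences. What it gives up is independence. Your lemma now rests on the spectral formula of Proposition~\ref{prop:1} (Tavar\'e's classical result). The paper's proof is a self-contained identity about the explicit numbers $r_{j,k}$, which is what lets the authors present it as an independent algebraic derivation of Riordan's identity in Appendix~\ref{app:appendix3}. The same ratio identity also drives the relation $f_k-f_{k-1}=xf_k'/\binom{k}{2}$ used in the proof of Proposition~\ref{prop:2}.
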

\begin{remark}[Connection to Touchard--Riordan numbers]
The coefficients \(r_{j,k}\) appearing in \Cref{lemma:4} are closely related to the Touchard--Riordan numbers \(t_{k,j}\)~\citep{OEIS_A067311, riordan1975distribution}.
Our proof of \Cref{lemma:4} provides a self-contained algebraic derivation that
avoids the combinatorial framework of chord crossings used by
\citet{riordan1975distribution}.
A precise correspondence is discussed in \Cref{app:appendix3}.
\end{remark}

\begin{remark}[Advantage of $a_{i,k}$ compared to $r_{j,k}$ in computing the polynomial]
Instead of computing $\sum_{j=1}^k r_{j,k} x^{\binom{j}{2}}$ directly using $r_{j,k}$, we prefer to compute it using $a_{i,k}$ to obtain a more accurate numerical result. The reason is that the signs of the coefficients $r_{j,k}$ alternate, which can cause severe numerical cancellation, whereas the coefficients $a_{i,k}$ are always positive and therefore yield a more numerically stable evaluation of the polynomial.
This factorization improves the numerical evaluation of the bound probability in Proposition~\ref{prop:1},
$\text{pr}\!\left(T_2 \le \tau \mid T_{k+1}=u, N_e(t)\right)$,
whose inverse appears as a component of the likelihood function. This generally leads to more accurate estimation of $N_e(t)$ for all methods described in \Cref{sec:inf}.
\end{remark}
The factorization in \Cref{lemma:4} yields the following polynomial inequality, which enables the direct construction of $\Lambda_k^U(t)$.
\begin{restatable}{proposition}{ef}\label{prop:2}
\Statementef
\end{restatable}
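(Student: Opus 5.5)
The plan is to rewrite the inequality as a product inequality between bound probabilities and prove that inequality by a coupling argument on the coalescent. Write $P_k(x)\coloneqq\sum_{j=1}^k r_{j,k}x^{\binom{j}{2}}$.

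\textbf{Step 1: reduce to a product inequality.} For $k=2$ the coefficients are $r_{1,2}=1$ and $r_{2,2}=-3\cdot 2/(3\cdot 2)=-1$, so $P_2(x)=1-x$. For $x\in[0,1)$ we have $P_k(x)>0$. This follows from \Cref{lemma:4}: $P_k(x)=(1-x)^{k-1}\sum_i a_{i,k}x^i$, the $a_{i,k}$ are nonnegative, and $a_{0,k}=r_{1,k}=1$. Multiplying through by $P_k(x)$ and by $1-x$, the claim becomes
\[
P_2(x)\,P_{k-1}(x)\;\le\;P_k(x)\qquad(x\in[0,1)).
\]
Equivalently, via the factorization of \Cref{lemma:4}, it becomes $\sum_i a_{i,k-1}x^i\le\sum_i a_{i,k}x^i$. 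I will not try to prove this coefficientwise from the recursion, because the product form has a clean probabilistic meaning.

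\textbf{Step 2: interpret the $P_k$ as probabilities.} By \Cref{prop:3}, or by \Cref{prop:1} with constant $N_e\equiv 1$ and horizon $s=-\log x\in(0,\infty]$, $P_m(x)$ is the probability that a Kingman coalescent started from $m$ lineages reaches a single lineage within time $s$. Write this as $\text{pr}(D_m\le s)$, where $D_m$ is its TMRCA. In particular $P_2(x)=\text{pr}(E\le s)$ with $E\sim\mathrm{Exp}(1)$. The target is therefore
\[
\text{pr}(D_k\le s)\;\ge\;\text{pr}(E\le s)\,\text{pr}(D_{k-1}\le s).
\]
This says $D_k$ is stochastically dominated by $\max(E',D_{k-1})$, with $E'$ independent of $D_{k-1}$.

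\textbf{Step 3: coupling through sampling consistency.} Run a Kingman coalescent on $k$ labelled lineages and single out one lineage $\ell$. By sampling consistency, the genealogy of the remaining $k-1$ lineages is itself a Kingman coalescent, with TMRCA distributed as $D_{k-1}$. Let $J$ be the time at which $\ell$ first merges into the ancestry of the other $k-1$ lineages. Then $D_k=\max(D_{k-1},J)$.

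Condition on the whole path of the $(k-1)$-subsample, in which $m(t)\ge 1$ lineages are present at time $t$. Before $J$, lineage $\ell$ merges with one of these $m(t)$ lineages at total rate $m(t)\ge 1$. Hence
\[
\text{pr}(J\le s\mid\text{subsample path})\ge 1-e^{-s}.
\]
Integrating this over the event $\{D_{k-1}\le s\}$ gives $\text{pr}(D_k\le s)\ge(1-e^{-s})\,\text{pr}(D_{k-1}\le s)$, which is the target inequality. The case $x=0$ ($s=\infty$) is trivial. The same argument also works directly in the time-varying clock, since there the hazard is $m(t)/N_e(t)\ge 1/N_e(t)$.

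\textbf{Main obstacle.} The delicate step is justifying the conditional hazard claim in Step 3: that, given the subsample's path, the joining time of $\ell$ has hazard exactly $m(t)$ before it joins. I would check this by building the $k$-lineage process explicitly. Let the $(k-1)$-subsample run as a Kingman process, and let $\ell$ independently pick each of the current $m(t)$ ancestors at rate $1$. Then verify that the resulting pair-merger rates are all $1$, so that the combined process is Kingman on $k$ lineages.

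If that route stalls, the fallback is to prove $a_{i,k}\ge a_{i,k-1}$ from the recursion in \Cref{lemma:4}, using the Touchard--Riordan interpretation of the coefficients.
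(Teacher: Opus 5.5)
Your argument is correct, and it takes a genuinely different route from the paper. The paper works entirely with coefficients. It derives the identity $f_k-f_{k-1}=xf_k'/\binom{k}{2}$, which is the equation $\frac{d}{ds}\text{pr}(D_k\le s)=\binom{k}{2}\{\text{pr}(D_{k-1}\le s)-\text{pr}(D_k\le s)\}$ written in the variable $x=e^{-s}$. Via \Cref{lemma:4} it turns this into a recursion that writes $a_{i,k}$ as a nonnegative combination of $a_{i-1,k}$ and $a_{i,k-1}$. Induction then gives $0\le a_{i,k-1}\le a_{i,k}$, hence $g_{k-1}\le g_k$.

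You instead read the equivalent inequality $(1-x)P_{k-1}(x)\le P_k(x)$ as saying that $D_k$ is stochastically smaller than $\max(E',D_{k-1})$. You prove it with the coupling $D_k=\max(D_{k-1},J)$, in which the extra lineage joins at hazard $m(t)\ge 1$ given the subsample path. The explicit construction you propose (run the subsample as a Kingman coalescent on $k-1$ lineages, and let $\ell$ attach to each current block at rate $1$) makes every pair-merger rate equal to $1$. So the conditional-hazard step you flag as the main obstacle does go through.

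Each route buys something different. Yours is shorter, explains why the bound is exactly $1/(1-x)$, and applies directly to variable $N_e(t)$ and any starting time. The paper's gives strictly more: coefficientwise domination, hence $g_{k-1}(x)\le g_k(x)$ for all $x\ge 0$, and the nonnegativity of the $a_{i,k}$ that underlies the numerical-stability remark.

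One small fix. In Step 1 you take $a_{i,k}\ge 0$ from \Cref{lemma:4}, but that lemma does not assert it; the paper proves it only inside the proof of this very proposition, so citing it here is circular. You do not need it. Positivity of $P_k(x)$ on $[0,1)$ follows from your own Step 2: $P_k(x)=\text{pr}(D_k\le -\log x)>0$, because $D_k$ is a finite sum of independent exponential holding times.
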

\noindent The proof of \Cref{prop:2} is provided in \Cref{app:appendix2}. Substituting
$x = \exp\left\{\Lambda(t)-\Lambda(\tau)\right\}$ into the inequality yields $g_k(t, t, \tau)\leq 1/(1-\exp\!\left\{\Lambda(t)-\Lambda(\tau)\right\})  $, implying that
$\lambda_{k}^B(t)\leq \lambda_k^U(t)$, where $\lambda_k^{U}(t)$ is an upper-bound intensity constructed as follows:
\begin{align}
\lambda^{U}_{k}(t)
\;\coloneqq\;
\frac{\binom{k}{2}}{N_e(t)}
\frac{1}{1-\exp\!\left\{\Lambda(t)-\Lambda(\tau)\right\}}.
\label{eq:upperbound}
\end{align}

As mentioned above, the construction of $\lambda^U_{k}(t)$ enables practical simulation and yields a tractable acceptance ratio. We use time transformation to propose event times with intensity $\lambda^U_{k}(t)$ and apply thinning until acceptance. Specifically, to simulate $t_k$ from the bounded coalescent conditioned on
$t_{k+1}$, we initialize $t_{\mathrm{old}} = t_{k+1}$ and repeat the
following two steps until acceptance: 

\begin{description}
\item[Step 1 (Inverse transformation).]
According to \Cref{eq:upperbound}, 
\[
\Lambda_k^U(t)
= \int_0^t \lambda_k^U(s)\,ds
= \binom{k}{2}
\log\!\left(
\frac{e^{\Lambda(\tau)}-1}{e^{\Lambda(\tau)-\Lambda(t)}-1}
\right).
\]
Given $t_{\mathrm{old}}$, we draw $W \sim \mathrm{Exp}(1)$ and solve
\[
W = \Lambda_k^U(t_{\mathrm{new}}) - \Lambda_k^U(t_{\mathrm{old}})
\]
for $t_{\mathrm{new}}$, which yields
\[
t_{\mathrm{new}}
= \Lambda^{-1}\!\left[
\Lambda(\tau)
- \log\!\left\{1+e^{-W/\binom{k}{2}}\left(
e^{\Lambda(\tau)-\Lambda(t_{\mathrm{old}})}
- 1
\right)
\right\}\right].
\]
\item[Step 2 (Thinning).]
We accept $t_{\mathrm{new}}$ as $t_k$ with probability
\[
\frac{\lambda_k^B(t_{\mathrm{new}})}{\lambda_k^U(t_{\mathrm{new}})}
= g_k(t_{\mathrm{new}}, t_{\mathrm{new}}, \tau)
\left(1 - e^{\Lambda(t_{\mathrm{new}})-\Lambda(\tau)}\right).
\]
If the proposal is rejected, we set $t_{\mathrm{old}} = t_{\mathrm{new}}$
and repeat Steps~1–2.
\end{description}

Algorithm~\ref{algo:1} implements this scheme.

\begin{algorithm}[H]
\caption{Bounded coalescent simulation with tractable $\Lambda^{-1}$.}
\label{algo:1}
\begin{algorithmic}[1]
\REQUIRE $k = n$, $t = 0$, $N_e(t)$, $\Lambda(t)$, $\Lambda^{-1}(t)$.
\ENSURE $\{t_k\}_{k=n}^2$.
\REPEAT
\STATE Sample $E \sim \mathrm{Exp}\!\left(\binom{k}{2}\right)$ and $U \sim \mathrm{Unif}(0,1)$.
\STATE $t = \Lambda^{-1}\!\left[\Lambda(\tau) + E - \log\!\left\{e^{\Lambda(\tau)-\Lambda(t)} + e^{E} - 1\right\}\right]$
\IF{$U \le \lambda_k^B(t) / \lambda_k^U(t)$}
\STATE $t_k \leftarrow t$, \quad $k \leftarrow k-1$.
\ENDIF
\UNTIL{$k < 2$}
\end{algorithmic}
\end{algorithm}

We can extend \Cref{algo:1} to a more general setting in which only
$N_e(t)$ and the cumulative intensity function $\Lambda(t)$ are
numerically evaluable, while the inverse $\Lambda^{-1}(t)$ is not
available. Specifically, we assume 
\[
L \le \frac{1}{N_e(t)} \le M \quad \text{for all } t.
\] Details of this extension are provided in~\Cref{algo:3} in \Cref{app:appendix5}. 

\section{Inference of \texorpdfstring{$N_{e}(t)$}{Ne(t)}}\label{sec:inf}

We first implemented maximum likelihood estimation of $N_{e}$ assuming $N_{e}(t)$ is a piecewise constant function that changes at coalescent times, that is, $N_{e}(t)=\sum^{n}_{k=2}N^{(k)}_{e} \mathbbm{1}(t_{k+1}\leq t \leq t_{k})$. Estimators of this type are called skyline estimators \citep{strimmer2001exploring}. They often have large mean squared errors \citep{lan2015efficient}, but are very fast to compute and provide a useful baseline for evaluating our proposed Bayesian estimator. In our implementation of the skyline estimator, we used gradient ascent.

To estimate the posterior distribution of $N_{e}(t)$, we follow the Bayesian nonparametric framework for the standard coalescent in \citet{Palacios2013},
 and place a transformed Gaussian process prior on $N_e(t)$. As mentioned in the introduction, in both the standard coalescent and the bounded coalescent, the likelihood involves intractable terms of the form $\int_{t_{k+1}}^{t_k} ds / N_e(s)$.  As a
result, direct evaluation of the likelihood requires integration over the effective population size trajectory. 
Many existing methods for the standard coalescent evaluate $\int_{t_{k+1}}^{t_k} ds/N_e(s)$ numerically using Riemann-sum approximations \citep{lan2015efficient,baele2020hamiltonian}. Here, we adapt this discretization-based approach to the bounded coalescent and refer to the resulting method as \textit{BC-Discrete}.
To perform exact MCMC inference without discretization, \citet{Palacios2013} proposed a data augmentation scheme by introducing rejected proposals in a virtual thinning generative process. This method requires an upper-bounding intensity function whose point process likelihood is tractable, as well as a tractable intensity ratio. However, in a bounded coalescent model, \Cref{eq:intensity} shows that it is difficult to find an upper bound intensity function with both a tractable likelihood and a tractable intensity ratio over $\lambda^B(t)$ (typically the cumulative function of the nonparametric estimate of $N_e(t)$ is intractable).
To address this issue, we adapt the random integral (RI) method with elliptical slice sampling proposed in \citet{tang2026exact} to the bounded coalescent and call our method \textit{BC-RI}.

For the \textit{BC-Discrete} method, we define a regular grid 
$0=s_0<s_1<\cdots<s_J=\tau$ with $J$ intervals,
and approximate the effective population size as piecewise constant, with
$N_e(s)=N_{e,i}, s\in(s_{i-1},s_i]$.
For each coalescent interval $(t_{k+1},t_k]$, we divide the interval
according to the regular grid. Since $N_e(s)$ is assumed to be constant
within each grid interval, we approximate
$\int_{t_{k+1}}^{t_k}\frac{ds}{N_e(s)}
\approx
\sum_{i\in\mathcal I_k}
\frac{\Delta_{ki}}{N_{e,i}}$,
where $\mathcal I_k$ denotes the set of grid intervals that overlap
$(t_{k+1},t_k]$, and $\Delta_{ki}$ is the length of the overlap between
$(t_{k+1},t_k]$ and the $i$th grid interval.

The \textit{BC-RI} method treats the integrals as latent random variables. We specify a joint truncated Gaussian prior for the values of $1/N_e(t)$ at finitely many locations and for its integrals. Specifically, we define a prior on
\[
\boldsymbol{\lambda}\coloneqq
\left[\begin{aligned}
&\frac{1}{N_e(x_1)},\ldots,\frac{1}{N_e(x_m)},\\
&\int_0^{t_n}\frac{ds}{N_e(s)},\ 
\int_{t_n}^{t_{n-1}}\frac{ds}{N_e(s)},\ldots,\
\int_{t_3}^{t_2}\frac{ds}{N_e(s)},\
\int_{t_2}^{\tau}\frac{ds}{N_e(s)}
\end{aligned}\right]^\T.
\] 
where $\left\{x_i\right\}_{i=1}^m$ are locations of interest, including both observed points $\{t_k\}_{k=2}^{n}$ and prediction (test) locations $\{s_l\}_{l=1}^{m-n+1}$, that is, $\{x_i\}_{i=1}^m\coloneqq \{t_k\}_{k=2}^{n} \cup\{s_l\}_{l=1}^{m-n+1}$. The following theorem gives the joint distribution of $\boldsymbol{\lambda}$ in the untruncated Gaussian case.  
\begin{restatable}{theorem}{fff}
Suppose the Gaussian process $f(\cdot)$ on the compact space $\mathcal{X}$ 
has continuous mean function $\mu(\cdot)$ and covariance kernel $k(\cdot,\cdot)$. For every finite set of points $s_1, \ldots, s_p \in \mathcal{X}$ and subsets $\left\{\mathcal{X}_i\right\}_{i=1}^q$ where $\mathcal{X}_i\subset \mathcal{X}$, the vector
\[
\mathbf{f}\coloneqq\left[f(s_1),\ldots,f(s_p),
\int_{\mathcal{X}_1}f(s)\,ds,\ldots,
\int_{\mathcal{X}_q}f(s)\,ds\right]^\T
\]
follows the Gaussian distribution
\begin{align*}
   \mathbf{f}\sim \mathcal{N}  \left( \boldsymbol{\mu},  \begin{pmatrix}
\boldsymbol{V}_{SS} & \boldsymbol{V}_{SI}\\
\boldsymbol{V}_{SI}^\T &\boldsymbol{V}_{II} \\
\end{pmatrix} \right),
 \end{align*}
  where $\boldsymbol{\mu}\coloneqq [\mu(s_1),\ldots, \mu(s_p), \int_{\mathcal{X}_1}\mu(s)ds, \ldots,\int_{\mathcal{X}_q}\mu(s)ds]^\T$, $\boldsymbol{V}_{SS}$ is the $p\times p$ matrix with $(i,j)$ entry $k(s_i,s_j)$, $\boldsymbol{V}_{SI}$ is a $p\times q$ matrix formed by covariance terms between function values $\{f(s_i)\}_{i=1}^p$ and integral values $\{\int_{\mathcal{X}_j}f(s)ds\}_{j=1}^q$ with $(i,j)$ entry being $\int_{\mathcal{X}_j} k(s_i, t)dt$, and $\boldsymbol{V}_{II}$ is a $q\times q$ matrix containing covariance terms for all pairs of integral values $\{\int_{\mathcal{X}_j}f(s)ds\}_{j=1}^q$ with $(i,j)$ entry being  $\int\int _{\mathcal{X}_i\times\mathcal{X}_j}k(s,t)dsdt$.
 \label{thm:integral}
\end{restatable}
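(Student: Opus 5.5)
The plan is to realize the integrals as $L^2$ (mean-square) limits of Riemann sums. Gaussianity is then inherited from the finite-dimensional distributions of $f$, and the mean and covariance are obtained by passing to the limit.

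\textbf{Step 1: the integrals are well defined.} Since $\mathcal{X}$ is compact and $\mu$, $k$ are continuous, both are bounded and uniformly continuous. I assume each $\mathcal{X}_i$ is measurable; in the application these sets are intervals. Define the centred process $g = f - \mu$. Continuity of $k$ makes $g$ mean-square continuous: $E\{g(s)-g(t)\}^2 = k(s,s)-2k(s,t)+k(t,t)\to 0$ as $t\to s$.

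I would work with a jointly measurable version of $g$, which exists by mean-square continuity. Fubini applies because $E\int_{\mathcal{X}_i}|g(s)|\,ds \le \int_{\mathcal{X}_i} k(s,s)^{1/2}\,ds<\infty$. Hence $\int_{\mathcal{X}_i} f(s)\,ds$ exists almost surely and equals the mean-square integral, namely the $L^2(\Omega)$-limit of Riemann sums $\sum_l f(u_l)\,|A_l|$ over partitions $\{A_l\}$ of $\mathcal{X}_i$ whose mesh tends to zero, with $u_l \in A_l$.

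\textbf{Step 2: Gaussianity.} Fix a sequence of partitions $P_m$ of the sets $\mathcal{X}_1,\ldots,\mathcal{X}_q$ with mesh tending to zero. Form the vector $\mathbf{f}_m$ whose first $p$ entries are $f(s_1),\ldots,f(s_p)$ and whose last $q$ entries are the Riemann sums. Each $\mathbf{f}_m$ is a linear map applied to finitely many values of $f$, so it is multivariate Gaussian.

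By Step 1, $\mathbf{f}_m\to\mathbf{f}$ in $L^2$. Therefore the means and covariances converge. The characteristic functions $\exp(i\mathbf{a}^\T\boldsymbol{\mu}_m-\tfrac12\mathbf{a}^\T\Sigma_m\mathbf{a})$ then converge to the characteristic function of $\mathbf{f}$. The limit has Gaussian form, so $\mathbf{f}$ is Gaussian with parameters $\lim\boldsymbol{\mu}_m$ and $\lim\Sigma_m$.

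\textbf{Step 3: identify the limits.} The mean of each Riemann sum is $\sum_l \mu(u_l)|A_l|\to\int_{\mathcal{X}_j}\mu$, because $\mu$ is continuous and therefore Riemann integrable on these sets.

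For the covariances:
\begin{itemize}
\item $\mathrm{cov}(f(s_i),\sum_l f(u_l)|A_l|)=\sum_l k(s_i,u_l)|A_l|\to\int_{\mathcal{X}_j}k(s_i,t)\,dt$.
\item $\mathrm{cov}$ of two Riemann sums is $\sum_{l,l'}k(u_l,u'_{l'})|A_l||A'_{l'}|$, which converges to $\iint_{\mathcal{X}_i\times\mathcal{X}_j}k(s,t)\,ds\,dt$ by uniform continuity of $k$ on $\mathcal{X}\times\mathcal{X}$.
\end{itemize}
The function-value block is simply $k(s_i,s_j)$. These are exactly the entries of $\boldsymbol{\mu}$, $\boldsymbol{V}_{SS}$, $\boldsymbol{V}_{SI}$ and $\boldsymbol{V}_{II}$ in the statement.

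The same formulas also follow directly from Fubini. For example,
\[
E\Big\{g(s_i)\int_{\mathcal{X}_j} g(t)\,dt\Big\}=\int_{\mathcal{X}_j}E\{g(s_i)g(t)\}\,dt ,
\]
which is justified by Cauchy--Schwarz together with boundedness of $k$.

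\textbf{Main obstacle.} The delicate step is Step 1. It requires showing that the pathwise integral exists, coincides with the $L^2$ limit of Riemann sums, and is measurable. This needs a measurable version of $f$ and a condition under which Riemann sums converge on general subsets $\mathcal{X}_i$; Jordan measurability suffices.

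I would handle it by first proving the mean-square convergence directly. The estimate is
\[
E\Big\{\sum_l g(u_l)|A_l|-\int g\Big\}^2\le |\mathcal{X}_i|^2\sup_{|s-t|\le \delta}E\{g(s)-g(t)\}^2\to 0,
\]
where $\delta$ is the mesh. Once this holds, everything else is routine limit-passing.
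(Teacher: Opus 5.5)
Your proof is correct. There is no in-paper proof to compare it with: the paper states this theorem only as the extension of Theorem~2.1 of \citet{tang2026exact} from a single integral to several integrals, and the appendices do not return to it.

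Your mean-square Riemann-sum argument supplies exactly that extension. For each partition, the $q$ Riemann sums and $f(s_1),\ldots,f(s_p)$ are one linear image of a single finite-dimensional marginal of $f$. They are therefore jointly Gaussian before any limit is taken. $L^2$ convergence then preserves joint Gaussianity and identifies the mean and covariance blocks.

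Your route also makes explicit three assumptions that the statement leaves implicit:
\begin{itemize}
\item The integrals must be taken for a jointly measurable modification of $f$. Two such modifications give almost surely equal integrals by Fubini, so the law of $\mathbf{f}$ does not depend on the choice.
\item The sets $\mathcal{X}_i$ must be measurable.
\item $\mathcal{N}$ must be read as allowing a singular covariance, for instance when two of the $\mathcal{X}_i$ coincide. Your characteristic-function limit covers that case automatically.
\end{itemize}

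One simplification: you do not need Jordan measurability. Your $L^2$ estimate uses only the cell diameters. Any bounded Lebesgue-measurable $\mathcal{X}_i$ can be split exactly into finitely many measurable cells of diameter at most $\delta$. Take $|A_l|$ to be the Lebesgue measure, and let $\omega_\mu$ and $\omega_k$ be the moduli of continuity of $\mu$ and $k$. Then the mean sums differ from their limits by at most $|\mathcal{X}_j|\,\omega_\mu(\delta)$. The covariance sums differ from theirs by at most $|\mathcal{X}_j|\,\omega_k(\delta)$ or $|\mathcal{X}_i|\,|\mathcal{X}_j|\,\omega_k(\delta)$. So your argument holds for arbitrary measurable subsets of $\mathcal{X}$.
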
 
Following \Cref{thm:integral}, which extends Theorem 2.1 of \citet{tang2026exact} from a single integral term to multiple integrals, we place a truncated Gaussian prior on $\boldsymbol{\lambda}$ with support on the positive orthant. Precisely, $\boldsymbol{\lambda} \sim \mathcal{TN}(\boldsymbol{\mu}, \boldsymbol{V})$, with density
$$p(\boldsymbol{\lambda})=\frac{\exp\left\{ -\frac{1}{2}(\boldsymbol{\lambda}-\boldsymbol{\mu})^\T\boldsymbol{V}^{-1} (\boldsymbol{\lambda}-\boldsymbol{\mu})    \right\}}{\int_{\mathcal{S}} \exp\left\{ -\frac{1}{2}(\boldsymbol{\lambda}-\boldsymbol{\mu})^\T\boldsymbol{V}^{-1} (\boldsymbol{\lambda}-\boldsymbol{\mu})    \right\} d\boldsymbol{\lambda}} \cdot \mathbbm{1}( \boldsymbol{\lambda}> \boldsymbol{0})$$
where $\mathcal{S}=[0,+\infty)^{n+m}$ and $\mathbbm{1}( \boldsymbol{\lambda}> \boldsymbol{0})$ is an indicator function that equals 1 if all elements of $\boldsymbol{\lambda}$ are positive. The mean vector $\boldsymbol{\mu}$ and covariance matrix $V$ are defined according to \Cref{thm:integral}. 
In this work, we set the mean $\boldsymbol{\mu}$ to zero and use a boundary-corrected Brownian motion prior. We begin with the Brownian motion covariance kernel $k_{\mathrm{BM}}(x,x')=\min(x,x')/\theta$, where $\theta>0$ is a scalar precision parameter. Let $C/\theta$ denote the Brownian motion covariance obtained from \Cref{thm:integral} using the Brownian motion kernel $k_{\mathrm{BM}}(x,x')$, where \begin{align*}
C=\begin{pmatrix}
x_1&\ldots & \min(x_1, x_m)&\int_{0}^{t_n} \min(x_1,t)dt&\ldots& \int_{t_2}^{\tau} \min (x_1, t)dt  \\
\vdots&\ddots&\vdots&\vdots&\vdots\\
\min(x_m ,x_1)&\ldots &  x_m&\int_{0}^{t_n} \min(x_m,t)dt&\ldots& \int_{t_2}^{\tau} \min (x_m, t)dt \\
\int_{0}^{t_n} \min(x_1,t)dt&\ldots&\int_{0}^{t_n} \min(x_m,t)dt & \int_{0}^{t_n} \int_0^{t_n} \min(s,t) ds dt &\ldots& \int_{0}^{t_n} \int_{t_2}^{\tau} \min(s,t) ds dt \\
\vdots&\ddots&\vdots&\vdots&\vdots\\
\int_{t_2}^{\tau} \min(x_1,t)dt&\ldots&\int_{t_2}^{\tau} \min(x_m,t)dt & \int_{t_2}^{\tau} \int_0^{t_n} \min(s,t) ds dt &\ldots& \int_{t_2}^{\tau} \int_{t_2}^{\tau} \min(s,t) ds dt \\
\end{pmatrix}.
\end{align*} However, as discussed in \citet{tang2026exact}, a limitation of the mean-zero Brownian motion prior is that it assigns very small prior variance to function values at locations near the origin, which reduces model flexibility and typically leads to low posterior values in these regions. To address this issue, we adopt a new boundary-corrected Brownian motion prior, with covariance $ \tilde{C}/\theta$, where 
\begin{align}
\tilde{C}\coloneqq C+\sigma^2\boldsymbol{l}\boldsymbol{l}^{\T}, \qquad \boldsymbol{l} \coloneqq
(
\underbrace{1,\ldots,1}_{m},\ t_n,\ t_{n-1} - t_n,\ \ldots,\ t_2 - t_3,\ \tau - t_2)^\T.
\label{eq:bccov}
\end{align}  
This is equivalent to placing a prior $\mathcal{N}(0,\sigma^2/\theta)$ on the Brownian motion initial value $\lambda(0)$ and subsequently marginalizing it out, thereby alleviating the degeneracy of the
mean-zero Brownian motion prior near the origin. The vector $\boldsymbol{l}$ is defined differently from that in \citet{tang2026exact}, reflecting the structure induced by the multiple integral terms considered here. The derivation is given in \Cref{app:brownian}.

The posterior distribution is 
\begin{align*}
p(\boldsymbol{\lambda}, \theta | \{ x_i\}_{i=1}^m)
&=p(\boldsymbol{\lambda}, \theta | \mathbf{t}, \{ s_l\}_{l=1}^{m-n+1} )\\
&\propto\ p_\theta(\theta)\mathcal{TN}(\boldsymbol{\lambda} ; \boldsymbol{0}, V_\theta) f\left(\mathbf{t} |N_e(t), T_{n+1}=0, T_2\leq \tau\right)
\end{align*}
where, following \Cref{eq:likelihood},
\begin{align*}
&f\left(\mathbf{t}\mid N_e(t),T_{n+1}=0,T_2\leq\tau\right)\\
&\quad=\left[\prod_{k=n}^2\frac{\binom{k}{2}}{N_e(t_k)}
\exp\!\left\{-\binom{k}{2}\int_{t_{k+1}}^{t_k}\frac{ds}{N_e(s)}\right\}\right] \times \\
&\qquad\frac{\mathbbm{1}(t_2\leq\tau)}
{r_{1,n}+\sum_{j=2}^n r_{j,n}
\exp\!\left\{-\binom{j}{2}\int_0^\tau\frac{ds}{N_e(s)}\right\}}.
\end{align*} 
Here the covariance $V_\theta$ is constructed from the kernel function $k_\theta(\cdot,\cdot)$ as described in \Cref{thm:integral}, the mean of the Gaussian process prior is assumed to be zero, and $p_\theta(\theta)$ denotes the prior distribution on the kernel hyperparameter~$\theta$. We estimate the posterior distribution by alternating an elliptical slice update of $\boldsymbol{\lambda}$ with a conjugate update of the precision $\theta$. 
\begin{description}
\item[Sample $\boldsymbol{\lambda}\, |\,\theta, \left\{x_i\right\}_{i=1}^m$:]

\begin{align*}
p(\boldsymbol{\lambda}\mid \theta, \{x_i\}_{i=1}^m)
&\propto 
\mathcal{N}\!\left(\boldsymbol{\lambda}; \mathbf{0}, V_\theta \right) 
\prod_{i=1}^m \mathbbm{1}\!\left( \frac{1}{N_e(x_i)}> 0 \right) \times
\nonumber\\[0.5ex]
&  \prod_{k=n}^2 
\mathbbm{1}\!\left(
\int_{t_{k+1}}^{t_k} \frac{1}{N_e(s)}\,ds > 0
\right)
\mathbbm{1}\!\left(
\int_{t_2}^{\tau} \frac{1}{N_e(s)}\,ds > 0
\right)\times \\
&\quad f\left(\mathbf{t} |N_e(t), T_{n+1}=0, T_2\leq \tau\right)
\end{align*}

The displayed conditional can be evaluated from the latent vector. The positivity constraints complicate the construction of efficient Metropolis--Hastings proposals. We therefore use elliptical slice sampling (ESS)~\citep{murray2010elliptical}, which uses an adaptive bracket and reduces each update to a one-dimensional search along an ellipse. For target
distributions of the form
$p(f)=\mathcal{N}(f;0,\Sigma)L(f)/Z$, we can incorporate the
indicator terms inherited from the truncated normal prior into $L(f)$.\vspace{0.2in}
\item[Sample $\theta\, |\,\boldsymbol{\lambda}, \left\{x_i\right\}_{i=1}^m$:]

\begin{align*}
p(\theta | \boldsymbol{\lambda},\{ x_i\}_{i=1}^m)&\propto\ p_\theta(\theta)\frac{\exp\left\{    -\frac{1}{2} \boldsymbol{\lambda}^\T V_\theta^{-1}\boldsymbol{\lambda} \right\}}{ \int_{\mathcal{F}} \exp  \left\{-\frac{1}{2} \boldsymbol{\lambda}^\T V_\theta^{-1}\boldsymbol{\lambda}\right\}d\boldsymbol{\lambda}},
\end{align*}
where $\mathcal{F}=[0,+\infty)^{m+n}$. For the boundary-corrected Brownian motion covariance $V_\theta=\tilde{C}/\theta$, the change of variables $\boldsymbol{z}=\theta^{1/2}\boldsymbol{\lambda}$ makes the normalizing integral proportional to $\theta^{-(m+n)/2}$, with a proportionality constant independent of $\theta$. Hence
\begin{align*}
p(\theta | \boldsymbol{\lambda},\{ x_i\}_{i=1}^m)
    \propto\  p_{\theta}({\theta})\theta^{(m+n)/2}\exp\left\{    -\frac{\theta}{2} \boldsymbol{\lambda}^\T\tilde{C}^{-1}\boldsymbol{\lambda} \right\}.\end{align*}
   By assigning a gamma prior $p_\theta(\theta)=\Gamma(\alpha,\beta)$, the posterior distribution of $\theta$ remains conjugate and is   
a gamma distribution with parameters $\Tilde{\alpha}=\alpha+(m+n)/2$ and $\Tilde{\beta}=\beta+\boldsymbol{\lambda}^\T \tilde{C}^{-1}\boldsymbol{\lambda}/2$,  which can be sampled directly; here $\beta$ and $\Tilde{\beta}$ are rate parameters. 

\end{description}

\section{Experimental results}
\subsection{Simulation algorithm}\label{sec:simres} 
We validate our implementation of \Cref{algo:1} by comparing its simulations with those obtained by rejection sampling. We simulated 1000 genealogies with 10 tips using each algorithm in two scenarios: (1) $N_e(t)=1$ and $\tau=0.5$, and (2) $N_e(t)=25e^{-5t}$ and $\tau=0.9$. Figure~\ref{fig:simul} shows histograms of $T_3$, the coalescent time when three lineages remain. We compare the empirical distributions using the Kolmogorov--Smirnov distance, $D=\sup_t|\widehat F(t)-\widehat G(t)|$, where $\widehat F$ and $\widehat G$ are the empirical cumulative distribution functions. The distances between \Cref{algo:1} and rejection sampling are $0.0310$ and $0.0280$ in the two scenarios, respectively. For comparison, across 100 comparisons between two independent samples of 1000 genealogies generated by \Cref{algo:1}, the mean distances are $0.0370$ and $0.0385$, respectively. These results are consistent with agreement between the implementations at the scale of Monte Carlo variation.

\begin{figure}[H]
\centering

\begin{minipage}[t]{0.48\textwidth}
    \centering
     \includegraphics[
        width=1.0\textwidth,
        height=0.17\textheight,
        keepaspectratio
    ]{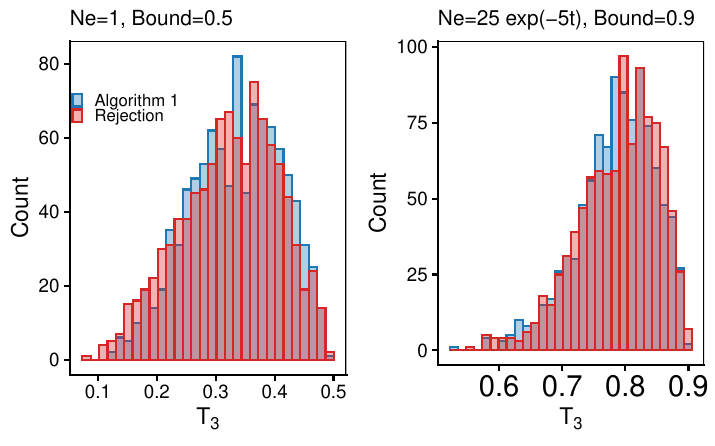}
                           \captionsetup{font=footnotesize}
    \caption{Comparison of \Cref{algo:1} and rejection sampling.
    Histograms of $T_{3}$ are based on 1000 realizations of genealogies
    with 10 tips under the bounded coalescent. The first panel corresponds
    to simulations with $N_{e}(t)=1$ and $\tau=0.5$, and the second panel
    corresponds to simulations with $N_{e}(t)=25e^{-5t}$ and $\tau=0.9$.}
    \label{fig:simul}
\end{minipage}
\hfill
\begin{minipage}[t]{0.48\textwidth}
    \centering
    \includegraphics[
        width=\textwidth,
        height=0.2\textheight,
        keepaspectratio
    ]{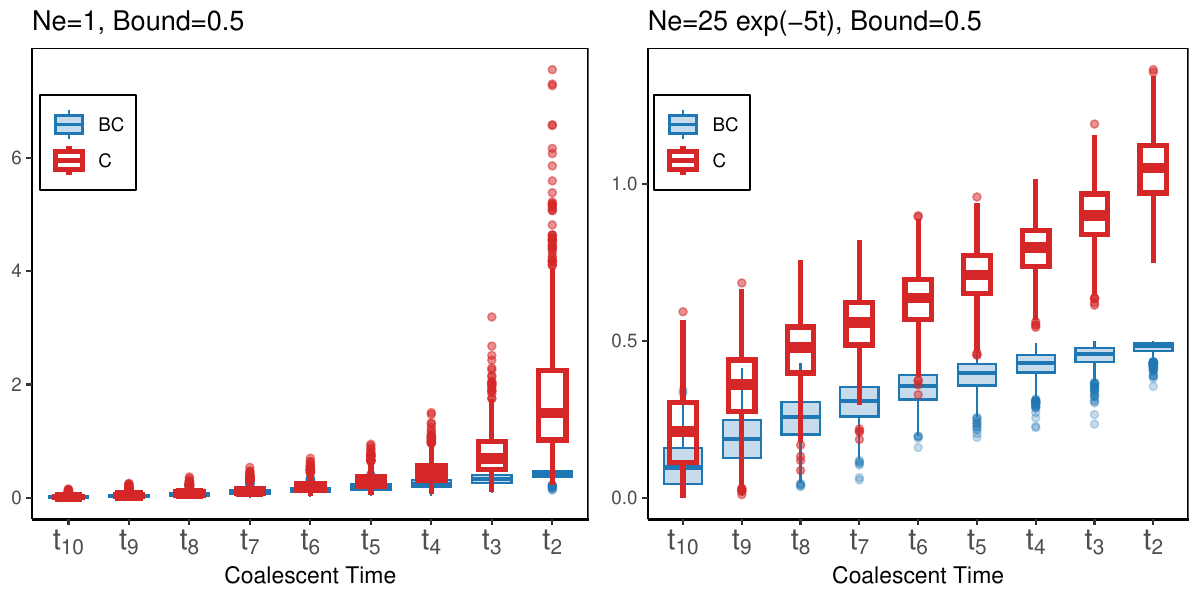}
    \captionsetup{font=footnotesize}
    \caption{Boxplots of 1000 simulated coalescent times of genealogies
    with 10 tips under the bounded coalescent (BC, blue) with
    $\tau=0.5$ and the standard coalescent (C, red) in two scenarios:
    (left) $N_e=1$; (right) $N_e(t)=25e^{-5t}$.}
    \label{fig:bcsc}
\end{minipage}

\end{figure}
To evaluate the computational performance of our proposed simulation algorithm, we generated 3{,}000 replicates under $N_{e}(t)=1$ with upper bounds $\tau=0.50$ and $\tau=1.00$, and under $N_{e}(t)=25e^{-5t}$ with upper bounds $\tau=0.55$ and $\tau=0.71$, each for genealogies with 50 and 100 tips. We compared Algorithm~\ref{algo:1} with rejection sampling and, where applicable, the simulation method of \citet{carson2022bounded}.  For each method, we report the mean wall-clock time per genealogy. Because the method of Carson et al. is designed for constant effective population sizes, we apply it only to simulations with $N_{e}(t)=1$. At the reported precision, Algorithm~\ref{algo:1} is the fastest method or is tied for fastest in all settings in \Cref{table:3compare}. 
\begin{table}[htbp]
\setlength{\tabcolsep}{6pt}
\renewcommand{\arraystretch}{1.08}
\tbl{Mean computation time per simulated genealogy under constant and
exponentially varying effective population sizes}
{\begin{tabular*}{\textwidth}{@{\extracolsep{\fill}}llcrcr@{}}
\toprule
\multirow{2}{*}{\textbf{Tips}}
&
\multirow{2}{*}{\textbf{Method}}
&
\multicolumn{2}{c}{$N_e(t)=1$}
&
\multicolumn{2}{c}{$N_e(t)=25e^{-5t}$}
\\
\cmidrule(lr){3-4}
\cmidrule(lr){5-6}
&
&
\textbf{Bound}
&
\textbf{Time}
&
\textbf{Bound}
&
\textbf{Time}
\\
\midrule
\multirow{3}{*}{50}
& Algorithm 1
& \multirow{3}{*}{0.50}
& 0.002 s
& \multirow{3}{*}{0.55}
& 0.004 s
\\
& Naive rejection
&
& 0.161 s
&
& $>201.600$ s
\\
& Carson et al.
&
& 0.002 s
&
& --
\\
\cmidrule(r){1-6}
\multirow{3}{*}{100}
& Algorithm 1
& \multirow{3}{*}{0.50}
& 0.003 s
& \multirow{3}{*}{0.55}
& 0.013 s
\\
& Naive rejection
&
& 0.354 s
&
& $>201.600$ s
\\
& Carson et al.
&
& 0.017 s
&
& --
\\
\midrule
\multirow{3}{*}{50}
& Algorithm 1
& \multirow{3}{*}{1.00}
& 0.001 s
& \multirow{3}{*}{0.71}
& 0.003 s
\\
& Naive rejection
&
& 0.008 s
&
& 4.548 s
\\
& Carson et al.
&
& 0.003 s
&
& --
\\
\cmidrule(r){1-6}
\multirow{3}{*}{100}
& Algorithm 1
& \multirow{3}{*}{1.00}
& 0.002 s
& \multirow{3}{*}{0.71}
& 0.008 s
\\
& Naive rejection
&
& 0.010 s
&
& 31.075 s
\\
& Carson et al.
&
& 0.015 s
&
& --
\\
\bottomrule
\end{tabular*}}
\label{table:3compare}
\begin{tabnote}
The population size trajectories are $N_e(t)=1$ and $N_e(t)=25e^{-5t}$,
with the upper bounds $\tau$ and numbers of tips ($n=50$ or $100$)
shown in the table. Times are in seconds.
\end{tabnote}
\end{table}
The left panel of \cref{fig:bcsc} shows the empirical distributions of the  coalescent times of genealogies with 10 tips under a constant effective population size, comparing the bounded coalescent with $\tau=0.5$ and the standard coalescent. Although the distributions of the early coalescent times ($t_{10}$ and $t_9$, for example) are nearly identical in the two models, the distributions of coalescent times close to the TMRCA are very distinct. These discrepancies are more pronounced in the right panel of \cref{fig:bcsc}, which displays the difference between the two distributions when $N_{e}(t)=25 e^{-5t}$ and $\tau=0.5$. This observation is consistent with Carson et al.'s finding that ignoring the bound in the likelihood can bias $N_{e}(t)$ estimation. 

\subsection{Inference of \texorpdfstring{$N_e(t)$}{Ne(t)} using simulated data}\label{sec:bc}

We first simulated a single genealogy with 100 tips under each of the following three effective population size trajectories and upper bounds: (1)  $N_{e_1}(t)=1$, $\tau_1=1$;
(2) $N_{e_2}(t)=3\exp\{-t\}$, $\tau_2=0.7$; and
(3) $N_{e_3}(t)=25\exp\{-5t\}$, $\tau_3=0.71$. To validate our likelihood implementation and compare estimates with and without the bound restriction, we computed the maximum likelihood estimates under the bounded coalescent as described at the beginning of Section~\ref{sec:inf}, and compared them with the maximum likelihood estimates under the standard coalescent \citep{strimmer2001exploring} in \cref{fig:MLE_DIS}. In these examples, the estimates are closer to the truth under the bounded coalescent, especially towards the time to the most recent common ancestor and in the last two examples.

\begin{figure}[H]
  \centerline{
   \includegraphics[width=1.0\textwidth]{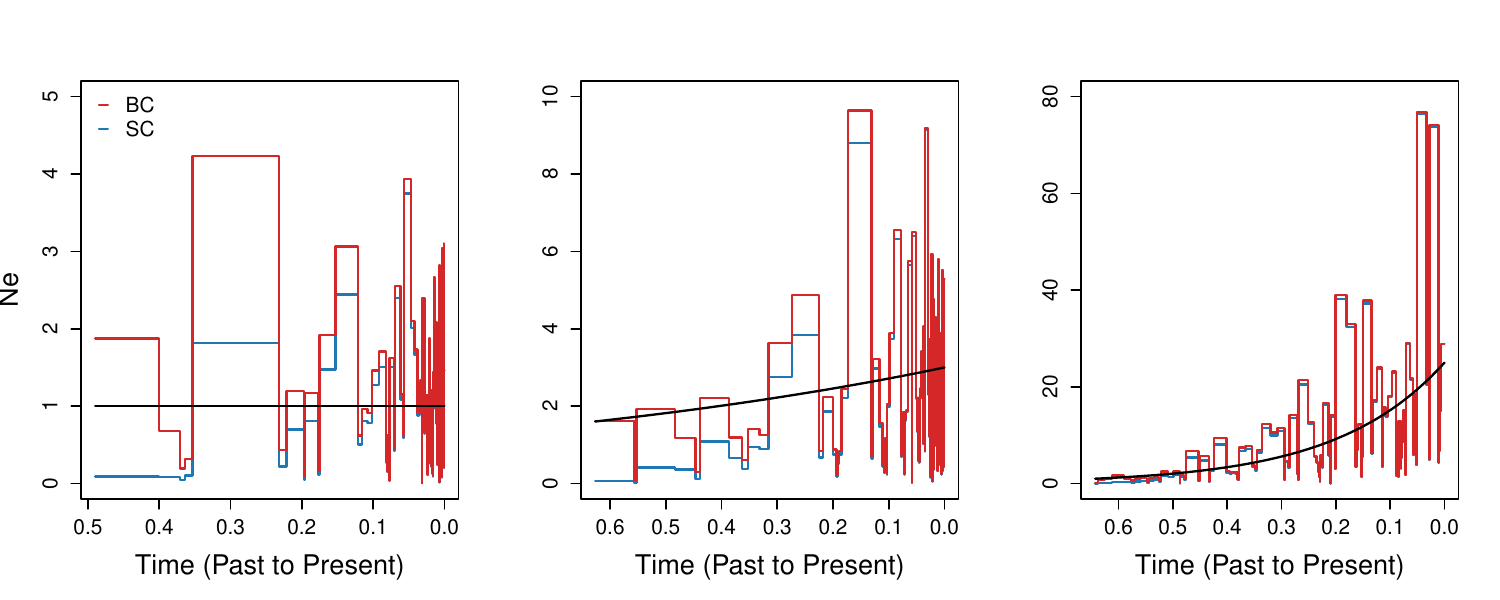}
   }

\captionsetup{font=footnotesize}
  \caption{Maximum likelihood estimation under the standard coalescent
likelihood (SC) and under the
bounded coalescent likelihood (BC). Columns from left to right show the estimated
effective population size trajectories of $N_{e_1}(t)$, $N_{e_2}(t)$, and
$N_{e_3}(t)$ for a single simulated dataset with 100 tips. The simulated
datasets are the same as those shown in \cref{fig:comp2}.
}
    \label{fig:MLE_DIS} 
\end{figure}

\begingroup
\makeatletter
\setlength{\@fptop}{0pt plus 1fil}
\setlength{\@fpbot}{0pt plus 1fil}
\makeatother
\begin{figure}[p]
  \centering
  \includegraphics[width=\linewidth,height=0.76\textheight,keepaspectratio]{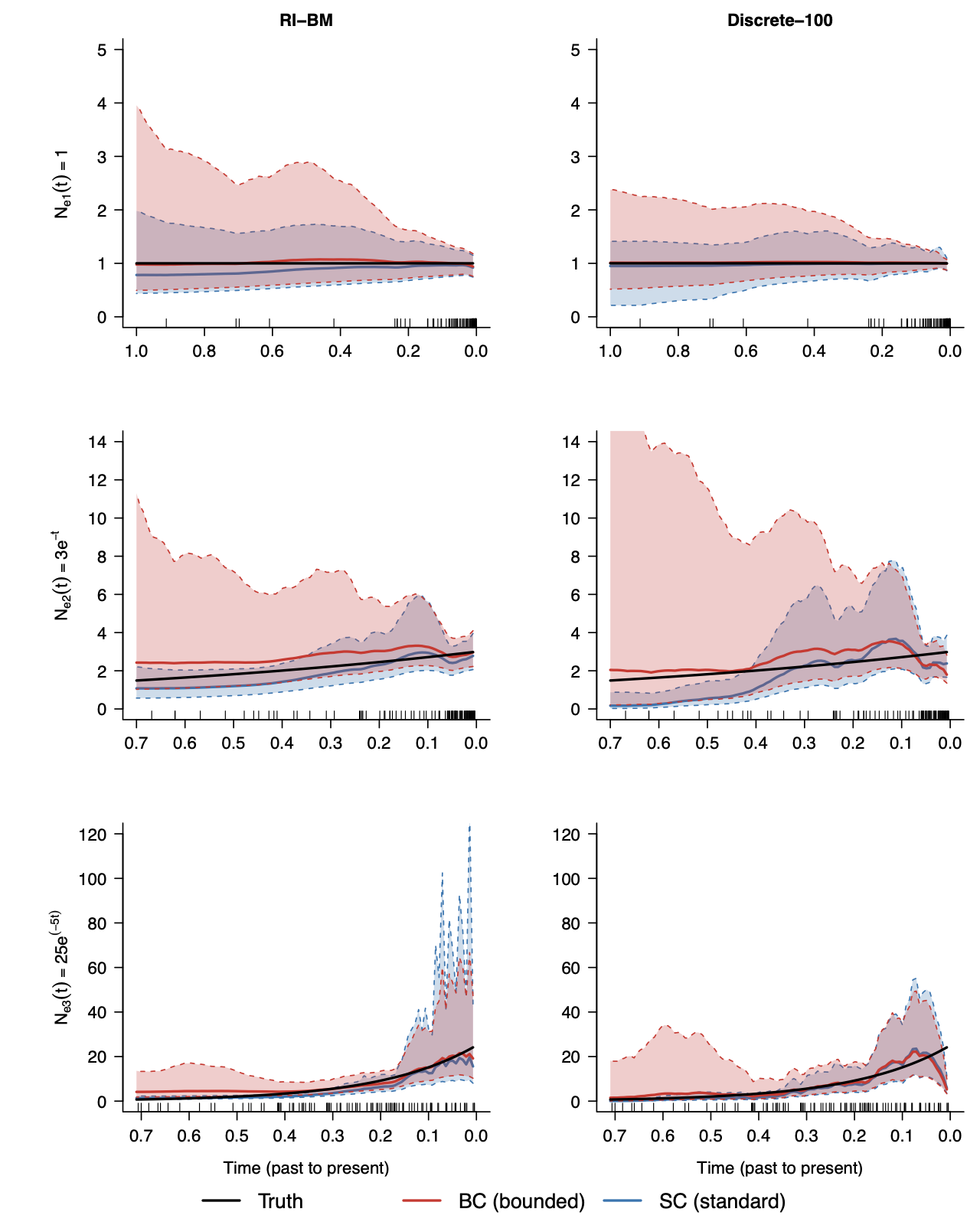}
  \captionsetup{font=footnotesize}
\caption{
Posterior inference for the effective population size trajectories under the
standard coalescent (SC) and bounded coalescent (BC) likelihoods for a single
simulated dataset with 100 tips under bounded coalescent models.
Rows from top to bottom correspond to
$N_{e_1}(t)$, $N_{e_2}(t)$ and $N_{e_3}(t)$.
Columns from left to right show the random integral method (RI) and the discretized method with 100 grid points.
Within each panel, the solid black curve denotes the true effective population
size trajectory, while the coloured solid curves denote the posterior medians
under the BC and SC likelihoods.
The corresponding $95\%$ credible intervals, given by the $2.5\%$
and $97.5\%$ posterior quantiles, are indicated by the coloured interval
boundaries and shaded regions.
Vertical tick marks along the bottom of each panel indicate the simulated
coalescent-event times.
Time is measured backwards from the present, so the horizontal axis runs from
the upper bound $\tau$ on the left to the present on the right, and coalescent
events accumulate towards the present.
The BC and SC likelihoods are distinguished by colour as indicated in the
legend.
}
    \label{fig:comp2} 
\end{figure}

\endgroup

To compare the discretized and random integral methods, we simulated 30
genealogies with 100 tips under the bounded coalescent for each of the three
effective population size trajectories and upper bounds specified above.
We fitted each dataset using the bounded coalescent (BC) and standard
coalescent (SC) likelihoods. \Cref{table:rerun100} summarizes the sum of
squared errors (SSE), grid coverage and average credible interval width
across datasets. Grid coverage is the proportion of the 100 evaluation
points at which the true trajectory lies within the pointwise $95\%$
equal-tailed credible interval; it is not simultaneous coverage of the entire
trajectory. Results for a single simulated dataset in each setting are
shown in \cref{fig:comp2}.

Both RI methods use the boundary-corrected Brownian motion kernel, with
a gamma prior on the precision having $\alpha=\beta=0.1$ and
$\sigma^2=\alpha/\beta=1$. The discretized methods use
$\alpha=\beta=0.01$, the default precision-prior parameters in
\texttt{phylodyn}. The comparisons therefore reflect the methods with
these prior specifications, rather than isolating the effect of
discretization alone. The MCMC run lengths are given in the table notes.

For $N_{e_1}(t)$ and $N_{e_2}(t)$, the bounded coalescent likelihood gives
lower median SSE than the standard coalescent likelihood for both RI and
Discrete, and BC-Discrete has the smallest median SSE among the four
methods. All four methods have median grid coverage of $100\%$ for
$N_{e_1}(t)$. For $N_{e_2}(t)$, median coverage is $100\%$ for both RI
methods, $99\%$ for BC-Discrete and $61\%$ for SC-Discrete. Within BC,
the lower median SSE of Discrete for $N_{e_2}(t)$ is accompanied by wider
credible intervals than those from RI (median average width $8.56$ versus $5.09$).
In all three settings, BC produces wider intervals than the corresponding
SC method.

The more rapidly varying trajectory $N_{e_3}(t)$ gives a different pattern.
SC-RI has the smallest median SSE ($614.24$), and both SC methods have
lower median SSE than their BC counterparts. Within RI, SC also has
higher median grid coverage than BC ($92\%$ versus $57\%$). Within
Discrete, BC instead has higher coverage than SC ($98\%$ versus $82\%$),
with wider intervals (median average width $17.67$ versus $11.43$).
Comparing the two BC methods, RI has lower median SSE than Discrete
($862.78$ versus $1301.12$), but its substantially lower coverage
($57\%$ versus $98\%$) indicates a trade-off between point estimation
and uncertainty quantification in this setting. The table alone does not
establish the cause of these differences.

RI requires less time per $10{,}000$ MCMC iterations than Discrete under
both likelihoods in all three settings, while BC is more expensive per
iteration than SC. Because the methods use different chain lengths,
these timings do not measure total computation time or efficiency per
effective posterior sample.

Based on these experiments, we recommend BC-Discrete when maintaining
high interval coverage is a priority. It has the lowest median SSE for
$N_{e_1}(t)$ and $N_{e_2}(t)$ and maintains median grid coverage of
$100\%$, $99\%$ and $98\%$ across the three settings. For the more rapidly
varying trajectory $N_{e_3}(t)$, BC-Discrete has the highest median
coverage, but also the highest median SSE and widest credible intervals
among the methods considered. This recommendation therefore prioritizes
coverage in that setting; higher coverage alone does not establish better
calibration of nominal $95\%$ credible intervals.

Code and documentation are available at \url{https://github.com/JuliaPalacios/phylodyn} for the discrete method and at \url{https://github.com/bingjingle/boundedcoal} for the random integral method. Simulation code is available in both repositories.

\begin{table}[htbp]
\small
\setlength{\tabcolsep}{1.5pt}
\renewcommand{\arraystretch}{1.12}
\tbl{Posterior inference across 30 simulated datasets with 100 tips
for each effective population size trajectory}
{\begin{tabular*}{\textwidth}{@{\extracolsep{\fill}}llllll@{}}
\toprule
Trajectory & Method & SSE & Coverage & Interval width & Time (s) \\
\midrule
\multirow{4}{*}{$N_{e_1}(t)$}
& BC-RI & 0.53 (0.17, 1.69) & 100\% (100\%, 100\%) & 1.78 (1.61, 1.88) & 4.47 $\pm$ 0.25 \\
& SC-RI & 3.79 (2.60, 5.80) & 100\% (100\%, 100\%) & 1.01 (0.95, 1.07) & 0.64 $\pm$ 0.13 \\
& BC-Discrete & \textbf{0.01 (0.00, 0.03)} & 100\% (100\%, 100\%) & 1.10 (0.96, 1.24) & 9.32 $\pm$ 0.64 \\
& SC-Discrete & 0.22 (0.14, 0.37) & 100\% (100\%, 100\%) & 0.83 (0.78, 0.90) & 2.75 $\pm$ 0.04 \\
\midrule
\multirow{4}{*}{$N_{e_2}(t)$}
& BC-RI & 30.50 (17.74, 54.71) & 100\% (100\%, 100\%) & 5.09 (4.71, 5.78) & 5.72 $\pm$ 0.43 \\
& SC-RI & 34.34 (22.35, 42.65) & 100\% (99\%, 100\%) & 1.72 (1.53, 1.90) & 0.62 $\pm$ 0.03 \\
& BC-Discrete & \textbf{22.34 (12.82, 37.61)} & 99\% (98\%, 99\%) & 8.56 (7.98, 11.45) & 19.04 $\pm$ 0.60 \\
& SC-Discrete & 100.01 (82.03, 113.54) & 61\% (55\%, 71\%) & 2.51 (1.94, 2.73) & 2.87 $\pm$ 0.02 \\
\midrule
\multirow{4}{*}{$N_{e_3}(t)$}
& BC-RI & 862.78 (684.04, 1061.56) & 57\% (53\%, 62\%) & 16.35 (14.15, 18.91) & 8.05 $\pm$ 0.79 \\
& SC-RI & \textbf{614.24 (416.76, 979.43)} & 92\% (89\%, 93\%) & 15.60 (12.39, 18.64) & 0.75 $\pm$ 0.03 \\
& BC-Discrete & 1301.12 (1082.14, 1503.13) & 98\% (97\%, 98\%) & 17.67 (15.55, 20.66) & 23.59 $\pm$ 0.53 \\
& SC-Discrete & 935.85 (765.01, 1137.03) & 82\% (79\%, 86\%) & 11.43 (10.24, 12.51) & 2.94 $\pm$ 0.02 \\
\bottomrule
\end{tabular*}}
\label{table:rerun100}
\begin{tabnote}
BC, bounded coalescent; SC, standard coalescent; RI, random integral;
SSE, sum of squared errors; MCMC, Markov chain Monte Carlo.
Performance is evaluated on a regular grid of 100 points for each of 30 simulated
datasets with 100 tips, under $N_{e_1}(t)$, $N_{e_2}(t)$ and $N_{e_3}(t)$
with upper bounds $\tau_1$, $\tau_2$ and $\tau_3$. RI uses the
boundary-corrected Brownian motion kernel. Each RI method uses
1{,}000{,}000 MCMC iterations after a burn-in of 1{,}000{,}000, retaining
every 10th iteration. BC-Discrete uses 200{,}000 after a burn-in of
100{,}000, without thinning; SC-Discrete uses
2{,}000{,}000 after a burn-in of 100{,}000, retaining every 20th iteration.
The longer SC-Discrete runs were used to address convergence difficulties
with shorter chains. Columns~3--5 report the median ($25$th percentile,
$75$th percentile), summarized across datasets. Coverage is the proportion of grid
points covered by pointwise $95\%$ equal-tailed credible intervals, and interval width
is averaged over grid points within each dataset. Boldface denotes the
smallest median SSE within each setting. Running time is the mean
$\pm$ standard deviation in seconds across datasets, after scaling each run's elapsed
time by its total iteration count (including burn-in) to $10{,}000$ iterations.
\end{tabnote}
\end{table}

\subsection{Covid-19 introduction in Washington state}\label{sec:covid}

Clinical investigation, epidemiological studies and viral phylogenetic analyses suggest that SARS-CoV-2 began circulating in humans in late 2019. The World Health Organization reports that estimated introduction dates range from late September to early December 2019, with most estimates falling between mid-November and early December \citep[Page 7]{WHO_Covid}. We downloaded and analyzed 103 Human SARS-CoV-2 sequences collected in June 8, 2020 in Washington State and publicly available through Gisaid \citep{ShuMcCauley2017}. Sequence identifiers are available in \Cref{sec:gisaid}. Our estimated time to the most recent common ancestor is approximately November 7, 2019, assuming a mutation rate of $1.2\times 10^{-3}$ per site per year, and within the range of known possible values. We ran BEAST \citep{suchard2018bayesian} assuming the HKY mutation model for 300 million iterations and thinned every 50 thousand iterations to generate a single genealogy obtained as the CCD0 tree that maximizes the product of estimated posterior conditional clade probabilities \citep{berling2025accurate}. The first two plots in Figure \ref{fig:covid} show estimated 95\% credible regions and posterior medians of the effective population size with the two methods (RI and Discrete) under the two likelihoods: BC (red) and SC (blue), from November 7,2019 to June 8, 2020. Under the BC, the two methods yield similar credible regions and posterior medians that increase around mid-May of 2020, consistent with the reported increase in case counts shown in the last plot of Figure~\ref{fig:covid} \citep{WACOVID}. Comparing the posterior medians under the SC and the BC, the BC estimates are larger, for both RI and Discrete methods, suggesting under-reporting in case counts (in the third panel of Figure~\ref{fig:covid}) for the first months of the year, however there is a substantial overlap between the two credible regions. 

\begin{figure}[htbp]
  \centerline{
  \includegraphics[width=\textwidth]{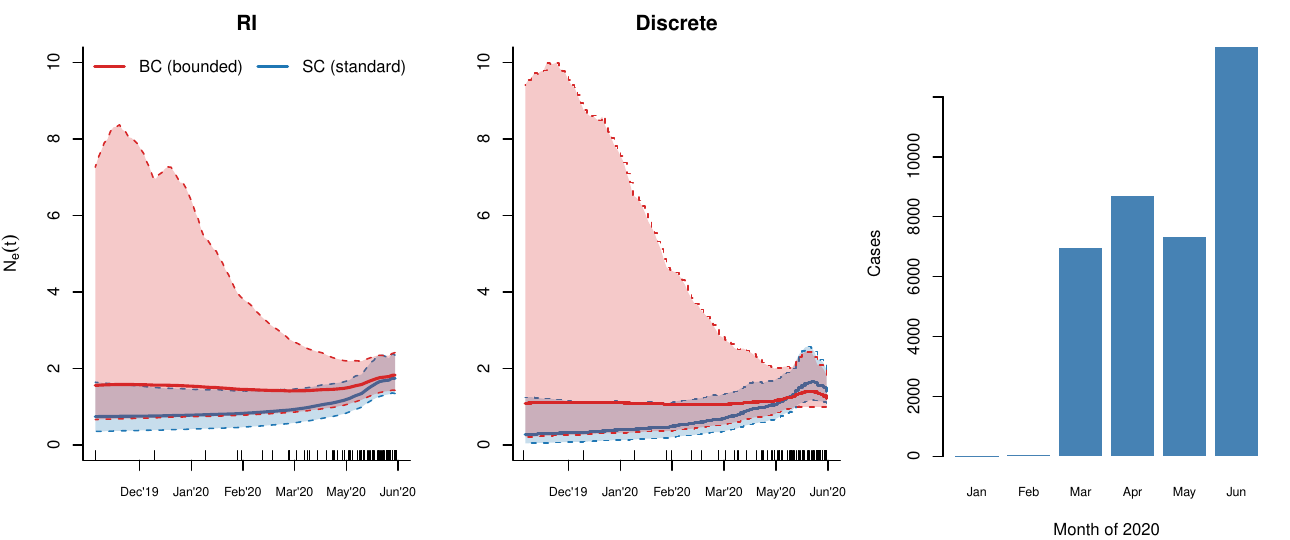}
         }
  \captionsetup{font=footnotesize}
\caption{
Posterior inference for the effective population size of SARS-CoV-2 in
Washington State under the standard coalescent (SC) and bounded coalescent (BC)
likelihoods, based on 103 sequences collected on 8 June 2020 and the CCD0
genealogy.
The first plot from left to right shows the results for the random integral method, and the second plot shows the results for the discretized method with 100 grid points
(Discrete 100 Grids). In the first two plots, shaded regions depict the 95\% credible intervals and solid curves depict the posterior medians. Vertical tick marks along the bottom of each panel indicate the coalescent
times of the estimated genealogy. The last plot shows the number of reported cases in Washington state during the first half of 2020 \citep{WACOVID}.}
  \label{fig:covid}
\end{figure}

\section{Discussion}\label{sec:dis}
In this work, we develop simulation and posterior-inference methods for the bounded coalescent with a time-varying effective population size. We evaluate these methods on simulated genealogies and illustrate their application using an estimated genealogy of 103 SARS-CoV-2 sequences from Washington State. The random integral method avoids discretization of the likelihood integrals and it is typically faster than the discrete method, however both methods perform comparably.

This work assumes a given fixed genealogy and hence ignores genealogical uncertainty. Further, our current implementation assumes all sequences are obtained at the same time. Extending the method to accommodate heterochronous sampling would broaden its applicability and remains a direction for future work.

Several questions remain for future investigation. We note that inference in the bounded coalescent tends to have wider credible regions closer to the bound (time to the most recent common ancestor). Conditioning on coalescence before the bound can weaken the evidence against large population sizes, leaving greater uncertainty about the overall population-size level and hence potentially wider credible intervals. As we approach the hard cutoff, the likelihood contribution from the tail is heavily distorted and the model becomes less informative about parameter values in that region. A model that increases regularization towards the bound may be a worthy future research direction.

In the third simulation setting, BC-RI has larger median SSE and lower grid coverage than SC-RI. Understanding this discrepancy, including the roles of the prior and limited information in parts of the trajectory, remains an important question. 
\section*{Declaration of the use of generative AI and AI-assisted technologies}

During the preparation of this work, the authors used ChatGPT, Codex and Claude to assist with language editing and code refinement. The authors reviewed and edited the resulting material and take full responsibility for the content of the manuscript.

\section*{Acknowledgement}
We thank Michael Howes for drawing our attention to the Touchard--Riordan numbers in the On-Line Encyclopedia of Integer Sequences (OEIS). J.A.P. acknowledges support from the NSF Career Award \#2143242 and NIH Award R35GM148338. We gratefully acknowledge all authors and their originating laboratories responsible for obtaining the specimens, and their submitting laboratories for generating the genetic sequence data used for this research and sharing them via the GISAID Initiative.

\section*{Appendices}
The appendices following the references include background on phase-type distributions, proofs of the main results, details of the simulation algorithm and Brownian motion prior, evaluation metrics, and sequence-accession information.

\clearpage
\phantomsection
\addcontentsline{toc}{section}{References}
\bibliographystyle{plainnat}
\bibliography{references}

\clearpage
\appendix
\crefalias{section}{appsec}
\numberwithin{equation}{section}
\numberwithin{figure}{section}
\numberwithin{table}{section}
\renewcommand{\theHsection}{appendix.\Alph{section}}
\renewcommand{\theHequation}{\theHsection.\arabic{equation}}
\renewcommand{\theHfigure}{\theHsection.\arabic{figure}}
\renewcommand{\theHtable}{\theHsection.\arabic{table}}
\section{Phase-type distributions}\label{app:appendix1}
A phase-type distribution is the distribution of the time until absorption in a finite-state continuous-time Markov chain (CTMC). Specifically, assume $X(t)$ is a homogeneous CTMC with $k$ transient states $1,\ldots,k$ and one absorbing state $k+1$, and $\boldsymbol{\pi}=(\boldsymbol{\alpha},0) \in \mathbbm{R}^{k+1}$ is the initial distribution of the CTMC.  The generator matrix of such a CTMC has the structure:
\begin{equation*}
\mathbf{Q} = \begin{bmatrix}
\mathbf{A} & \mathbf{a} \\
\mathbf{0} & \mathbf{0}
\end{bmatrix}.
\end{equation*}
Here $\mathbf{A}$ is the subintensity matrix, the part of the generator corresponding to the transient states. Let $T=\min\{t \geq 0: X(t)=k+1\}$ denote the time to absorption. Then $T\sim \mathrm{PH}(\boldsymbol{\alpha},\mathbf{A})$~\citep{horvath2024phase}, and
\begin{align}
\text{pr}(T<t)&=\text{pr}(X(t)=k+1) =\boldsymbol{\pi}e^{\mathbf{Q}t}\mathbf{e}^{\T}_{k+1} \label{eq:phase1}\\
&=1-\boldsymbol{\alpha}e^{\mathbf{A}t}\mathbbm{1}\nonumber\end{align}
where $\mathbf{e}_{k+1}$ is the row vector with its first $k$ entries equal to zero and its $(k+1)$th entry equal to one.

An inhomogeneous
phase-type distribution with initial distribution
$\boldsymbol{\pi} = (\boldsymbol{\alpha}, \mathbf{0})$ and time-varying generator
$\mathbf{Q}(t)$ is denoted by $\mathrm{IPH}(\boldsymbol{\alpha}, \mathbf{A}(t))$.
The key difference between homogeneous and inhomogeneous phase-type distributions is that the generator matrix is time-varying:
\[
\mathbf{Q}(t) =
\begin{pmatrix}
\mathbf{A}(t) & \mathbf{a}(t) \\
\mathbf{0} & \mathbf{0}
\end{pmatrix}.
\]

\citet{albrecher2019inhomogeneous} showed that for inhomogeneous phase-type distributions with generator matrix $\mathbf{Q}(t)$ and transient-state generator $\mathbf{A}(t)$, the corresponding cumulative distribution of $T$ can be expressed in terms of the product integral. In particular, their Corollary 2.4 states that if $\mathbf{A}(t_{1})$ and $\mathbf{A}(t_{2})$ commute for all $t_{1},t_{2} \geq 0$, then
\begin{align}\label{eq:nonph}
\text{pr}(T<t)&=1-\boldsymbol{\alpha}e^{\int^{t}_{0}\mathbf{A}(u)du}\mathbbm{1}.
\end{align}
Fortunately, this is the case for the coalescent with variable population size. 

The coalescent with constant population size is a pure death process with $n-1$ transient states,
corresponding to $n,  \ldots, 2$ lineages, and a single absorbing state
corresponding to one lineage, representing the most recent common ancestor. The initial distribution is 
  $ \boldsymbol{\pi} = (1, \underbrace{0, \ldots, 0}_{n-1})$, 
and the generator matrix is
\begin{align} \renewcommand{\arraystretch}{1.5}
\mathbf{Q} =
\begin{pmatrix}
-\binom{n}{2} & \binom{n}{2} & 0 & 0 & \cdots & 0 & 0 \\
0 & -\binom{n-1}{2} & \binom{n-1}{2} & 0 & \cdots & 0 & 0 \\
\vdots & \vdots & \vdots & \ddots & \vdots & 0 & 0 \\
0 & 0 & 0 & \cdots & -\binom{3}{2} & \binom{3}{2} & 0 \\
0 & 0 & 0 & \cdots & 0 & -\binom{2}{2} & \binom{2}{2} \\
0 & 0 & 0 & \cdots & 0 & 0 & 0
\end{pmatrix}.\label{eq:Qmat}
\end{align}
\renewcommand{\arraystretch}{1}
The absorption time $T$ is $T_{2}$, the TMRCA, whose cumulative
distribution function is given in \Cref{eq:phase1}.
The coalescent with time-varying effective population size
$N_e(t)$ is an inhomogeneous continuous-time Markov chain with initial distribution $\boldsymbol{\pi}$ and generator
\[
\boldsymbol{Q}(t) = \frac{1}{N_e(t)} \boldsymbol{Q}.
\] 

We now have all the background needed to prove the following proposition. 

\section{Proofs}\label{app:appendix2}
\begin{restateabc}
\Statementabc
\end{restateabc}
\vspace{.1in}
\begin{proof}

The cumulative distribution function of the TMRCA
$T_2$ can be obtained from~\Cref{eq:nonph} and after some algebraic computation as
\begin{equation}
\begin{aligned}
&\text{pr}(T_2<\tau \mid N_e(t), T_{n+1}=0)\\
&\qquad=1-\boldsymbol{\alpha}
\sum_{k=0}^\infty \frac{\Lambda(\tau)^k}{k!}
\boldsymbol{A}^k\mathbbm{1}\\
&\qquad=\boldsymbol{\pi}\exp\!\left\{\Lambda(\tau)\boldsymbol{Q}\right\}
\mathbf{e}^{\T}_{n}.
\end{aligned}
\label{eq:distgen}
\end{equation} 
where $\mathbf{e}_n$ denotes the unit vector whose first $n-1$ entries are zero
and whose last entry equals one,  $\Lambda(t)=\int^{t}_{0} du/ N_{e}(u)$ and $\mathbf{A}$ is the submatrix obtained from $\mathbf{Q}$ in \Cref{eq:Qmat} by eliminating the last column and the last row.

Through algebraic computation, the matrix $\boldsymbol{Q}$ admits the
left-right eigenvector decomposition
\begin{align*}
\boldsymbol{Q} = \sum_{i=1}^{n} \lambda_i \, \boldsymbol{u}_i \boldsymbol{v}_i,
\end{align*}
where the eigenvalues are
\[
\lambda_i = -\binom{n-i+1}{2} \quad (i=1,\ldots,n),
\]
and the corresponding left and right eigenvectors satisfy
$\boldsymbol{v}_i \boldsymbol{u}_j = \mathbbm{1}(i=j)$.
The $i$th right eigenvector is a column vector:
\begin{align*}
\mathbf{u}_{k,i}=\begin{cases}
    0 & k>i\\
    1 & k=i\\
    \prod^{i-1}_{j=k}\frac{(n-j+1)(n-j)}{(i-j)(2n-j-i+1)} & k<i
\end{cases}
\end{align*}
and the $i$th left eigenvector is a row vector:
\begin{align*}
\mathbf{v}_{i,k}=\begin{cases}
    0 & k<i\\
    1 & k=i\\
    \prod^{k}_{j=i+1}-\frac{(n-j+2)(n-j+1)}{(j-i)(2n-j-i+1)} & k>i
\end{cases}
\end{align*}
Consequently,
\begin{align*}
\exp\{\Lambda(\tau)\boldsymbol{Q}\}
&=\sum_{k=0}^\infty \frac{\Lambda(\tau)^k}{k!}\boldsymbol{Q}^k\\*
&=\sum_{k=0}^\infty \frac{\Lambda(\tau)^k}{k!} \sum_{i=1}^n \lambda_i^k\mathbf{u_{i}}\mathbf{v_{i}},\\*
&=\sum_{i=1}^n \exp\left\{\Lambda(\tau)\lambda_i\right\}\mathbf{u_{i}}\mathbf{v_{i}},
\end{align*}
Substituting this expression into \Cref{eq:distgen} yields
\begin{align*}
&\text{pr}(T_2<\tau \mid N_{e}(t),T_{n+1}=0)\\
&\qquad=\boldsymbol{\pi}
\left(\sum_{i=1}^n \exp\{\Lambda(\tau)\lambda_i\}
\mathbf{u_i}\mathbf{v_i}\right)\mathbf{e}^{\T}_{n}\\
&\qquad=\sum_{i=1}^n \exp\{\Lambda(\tau)\lambda_i\}
\boldsymbol{\pi}\mathbf{u_i}\mathbf{v_i}\mathbf{e}^{\T}_{n}\\*
&\qquad=\sum_{i=1}^n
\exp\!\left\{-\binom{n-i+1}{2}\Lambda(\tau)\right\}
\mathbf{u}_{1i}\mathbf{v}_{in}.
\end{align*}
A direct calculation (assuming $0! = 1$) gives
\begin{align*}
r^{*}_{i,n}=\mathbf{u}_{1i}\mathbf{v}_{in}&=
\frac{(-1)^{n-i}n!(n-1)!(2n-2i+1)}{(i-1)!(2n-i)!} 
\end{align*}
Re-indexing with $j = n-i+1$, we obtain
\begin{align*}
r_{j,n}&=r^*_{n-j+1,n}=\frac{(-1)^{j-1}n!(n-1)!(2j-1)}{(n-j)!(n+j-1)!}\\*
&=(-1)^{j-1}(2j-1)\frac{(n)_{j}}{(n-1+j)_{j}}.
\end{align*}
which completes the proof.
\end{proof}
\vspace{.2in}
\begin{restategh}
\Statementgh
\end{restategh}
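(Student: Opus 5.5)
The plan is to reduce the corollary to Proposition~\ref{prop:1} using the Markov property of the coalescent and a time shift of the cumulative rate. Conditional on $T_{k+1}=u$, there are exactly $k$ extant lineages at time $u$.

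By the Markov property of the inhomogeneous death process with generator $\boldsymbol{Q}(t)=\boldsymbol{Q}/N_e(t)$, the future evolution after $u$ depends on the past only through the current state. So, given $T_{k+1}=u$, the process on $[u,\infty)$ is a coalescent started from $k$ lineages at time $u$, with effective population size $\tilde N_e(s)\coloneqq N_e(u+s)$ for $s\ge 0$. In this shifted process, $T_{k+1}$ plays the role of the origin.

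Next, I apply Proposition~\ref{prop:1} with $n$ replaced by $k$ and $N_e$ replaced by $\tilde N_e$, at the shifted horizon $\tau-u$ (assuming $u\le\tau$). The cumulative intensity of the shifted trajectory is
\[
\tilde\Lambda(\tau-u)=\int_0^{\tau-u}\frac{ds}{N_e(u+s)}=\Lambda(\tau)-\Lambda(u).
\]
Substituting this into Proposition~\ref{prop:1} gives
\[
\sum_{j=1}^k r_{j,k}\exp\!\left[-\binom{j}{2}\{\Lambda(\tau)-\Lambda(u)\}\right],
\]
which is exactly the claimed expression.

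The only point needing care is that Proposition~\ref{prop:1} was stated with the origin at $0$ and a general trajectory $N_e$. Its proof, through \Cref{eq:nonph} and the spectral decomposition of $\boldsymbol{Q}$ for a death chain started at $k$ lineages, uses only commutativity of $\boldsymbol{A}(t)$, and that still holds for the shifted trajectory.

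Alternatively, the shift can be avoided. The transition matrix from $u$ to $\tau$ is $\exp\{(\Lambda(\tau)-\Lambda(u))\boldsymbol{Q}\}$, again because the matrices $\boldsymbol{Q}(t)$ commute. Restricting to the sub-chain on states $k,\dots,1$ and applying the eigen-expansion of the proof of Proposition~\ref{prop:1} with $n=k$ gives the same result.

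I do not expect a real obstacle here. The main bookkeeping is to confirm that the coefficients produced for a chain started at $k$ lineages are $r_{j,k}$, with $k$ replacing $n$ in both $(n)_j$ and $(n-1+j)_j$. This holds because the eigenvector formulas depend only on the starting number of lineages.

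It remains to handle boundary cases. For $u>\tau$ the probability is zero, and the statement is implicitly restricted to $u\le\tau$. For $k=n$ and $u=0$ the result reduces to Proposition~\ref{prop:1} itself.
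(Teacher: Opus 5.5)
Your proposal is correct and follows essentially the paper's proof. The paper likewise treats the conditional probability as the TMRCA distribution of a $k$-lineage coalescent started at time $u$, writes it via \Cref{eq:nonph} with $\int_u^\tau \mathbf{A}_k(t)\,dt$, and invokes Proposition~\ref{prop:1} with $n$ replaced by $k$ and $\Lambda(\tau)$ by $\Lambda(\tau)-\Lambda(u)$ (your shift-free route), and your remark that the formula needs $u\le\tau$ is a useful clarification the paper leaves implicit.
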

\vspace{.1in}
\begin{proof}
The probability $\text{pr}\!\left(T_2\leq\tau\mid T_{k+1}=u,N_e(t)\right)$
is the distribution function of the TMRCA for a genealogy that starts at time
$u$ with $k$ lineages. Let $\mathbf{A}_k(t)$ denote the transient generator
on the $k-1$ states corresponding to $k,\ldots,2$ lineages, and let
$\boldsymbol{\alpha}_k=(1,0,\ldots,0)$ be the initial row vector of length
$k-1$. By \Cref{eq:nonph},
\[
\text{pr}(T_2\leq\tau\mid T_{k+1}=u,N_e(t))
=1-\boldsymbol{\alpha}_k
\exp\!\left\{\int_u^\tau\mathbf{A}_k(t)\,dt\right\}\mathbbm{1}.
\]
The result follows from \Cref{prop:1} by replacing $n$ with $k$ and
$\Lambda(\tau)$ with $\Lambda(\tau)-\Lambda(u)
=\int_u^\tau dt/N_e(t)$.
\end{proof}

\vspace{.2in}
\begin{restateijk}
\Statementijk
\end{restateijk}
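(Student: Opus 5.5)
The plan is to show that the polynomial $P_k(x)\coloneqq\sum_{j=1}^k r_{j,k}\,x^{\binom{j}{2}}$ vanishes to order at least $k-1$ at $x=1$. Once that is known, the factorization and the recursion follow from polynomial division. The degree of $P_k$ is $\binom{k}{2}$, attained by the nonzero coefficient $r_{k,k}$. So the quotient $P_k(x)/(1-x)^{k-1}$ is a polynomial of degree
\[
\binom{k}{2}-(k-1)=\binom{k-1}{2}=M_k .
\]

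For the vanishing order I will use the probabilistic meaning of $P_k$ given by \Cref{prop:1} and \Cref{prop:3}, rather than manipulate the alternating coefficients directly. Take $N_e(t)\equiv 1$ and $u=0$ in \Cref{prop:3}, so $\Lambda(\tau)=\tau$. Then for every $\tau>0$,
\[
P_k(e^{-\tau})=\text{pr}(T_2\le\tau\mid T_{k+1}=0).
\]
This is the probability that a pure death chain started with $k$ lineages completes all $k-1$ coalescences by time $\tau$. That event forces a sum of $k-1$ independent exponential holding times, each with rate at most $\binom{k}{2}$, to be at most $\tau$. Each of these times is at most $\tau$ with probability at most $\binom{k}{2}\tau$, so
\[
0\le P_k(e^{-\tau})\le \left\{\tbinom{k}{2}\tau\right\}^{k-1}=O(\tau^{k-1})\quad(\tau\to0).
\]

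Next I transfer this bound to the variable $x$. Write $x=e^{-\tau}$; then $1-x\sim\tau$ as $\tau\to0$, so $P_k(x)=O\{(1-x)^{k-1}\}$ as $x\uparrow1$. Expand the polynomial around $1$ as $P_k(x)=\sum_m c_m(1-x)^m$. If some $c_m\neq0$ with $m<k-1$, the lowest such term would dominate as $x\uparrow 1$ and contradict the bound. Hence $c_0=\cdots=c_{k-2}=0$, and $(1-x)^{k-1}$ divides $P_k$. I expect this step to be the main obstacle: the cancellation among the alternating $r_{j,k}$ is hard to see algebraically, and the probabilistic interpretation is what makes it transparent. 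If one prefers a purely algebraic route, a fallback is to verify $\sum_j r_{j,k}\binom{j}{2}^{m}=0$ for $m=0,\ldots,k-2$, which is the vanishing of $\frac{d^m}{d\tau^m}P_k(e^{-\tau})$ at $\tau=0$. This can be done from the spectral representation in the proof of \Cref{prop:1}: $\boldsymbol{\pi}\boldsymbol{Q}^m\mathbf{e}_n^{\T}=0$ for $m<n-1$, since the bidiagonal $\boldsymbol{Q}$ moves mass at most one state per power.

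Finally, write $P_k(x)=(1-x)^{k-1}\sum_{i=0}^{M_k}a_{i,k}x^i$ and let $c_i\coloneqq\sum_{j=1}^k r_{j,k}\mathbbm{1}\{\binom{j}{2}=i\}$ be the coefficient of $x^i$ in $P_k$. Expanding $(1-x)^{k-1}=\sum_{s=0}^{k-1}(-1)^s\binom{k-1}{s}x^s$ and comparing coefficients of $x^i$ gives
\[
c_i=\sum_{s=0}^{k-1}(-1)^s\binom{k-1}{s}a_{i-s,k},
\]
with the convention $a_{i,k}=0$ for $i<0$ or $i>M_k$. Isolating the $s=0$ term and solving for $a_{i,k}$ yields exactly the stated recursion for $i=0,\ldots,M_k$. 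The recursion determines the quotient uniquely. The remaining coefficient identities for $i>M_k$ hold automatically because the divisibility has already been established, so no further check is needed.
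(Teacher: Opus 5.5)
Your proof is correct, and it reaches the key step, vanishing of $P_k$ to order $k-1$ at $x=1$, by a genuinely different route from the paper.

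The paper works purely algebraically from the closed form of $r_{j,k}$, writing $f_k$ for your $P_k$:
\begin{itemize}
\item It first proves by induction a closed form for the partial sums $\sum_{q\le j}r_{q,k}$, which gives $f_k(1)=0$.
\item It then runs a joint induction on $k$ and the derivative order $s$ to get $f_k^{(s)}(1)=0$ for $s\le k-2$. This is driven by the identity $r_{j,k+1}\{\binom{k+1}{2}-\binom{j}{2}\}=\binom{k+1}{2}r_{j,k}$, which is the coefficientwise form of the relation $f_{k}-f_{k-1}=xf_{k}'/\binom{k}{2}$ used again in the proof of Proposition~\ref{prop:2}.
\item Divisibility follows by Taylor expansion about $1$, and the $a_{i,k}$ follow by ascending long division. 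The paper only asserts this last step; you actually carry it out by coefficient comparison.
\end{itemize}

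You instead invoke Proposition~\ref{prop:1}, which is proved independently via the spectral decomposition, so there is no circularity. It lets you read $P_k(e^{-\tau})$ as the hypoexponential distribution function of $k-1$ independent holding times. The crude bound $\{\binom{k}{2}\tau\}^{k-1}$, together with $1-x\sim\tau$, then kills the first $k-1$ coefficients of the expansion about $1$. Your fallback, $\boldsymbol{\pi}\mathbf{Q}^m\mathbf{e}_n^{\T}=0$ for $m<n-1$, is the same idea at the generator level. The power-moment identities $\sum_j r_{j,k}{\binom{j}{2}}^{m}=0$ it yields are equivalent to the paper's falling-factorial identities $f_k^{(s)}(1)=0$ by a triangular change of basis.

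Your route is shorter, and it explains the vanishing order as the number of coalescences that must fit into a short time window. However, it leans on the eigenvector computation behind Proposition~\ref{prop:1}. The paper's route needs only the explicit formula for $r_{j,k}$. That is what makes it the self-contained algebraic derivation of the Touchard--Riordan identity claimed in the remark, and its recursion in $k$ is the same mechanism that later yields $a_{i,k}\ge a_{i,k-1}$ in Proposition~\ref{prop:2}.
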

\vspace{.1in}
\begin{proof} 
We first use induction to prove that the polynomial $f_k(x)\coloneqq\sum_{j=1}^k r_{j,k} x^{\binom{j}{2}}$ is divisible by $(1-x)^{k-1}$ and then obtain expressions for $a_{i,k}$ by applying the polynomial long-division algorithm.

We begin with an induction proof of the statement that
\begin{equation}
\begin{aligned}
S_{j,k}\coloneqq\sum_{q=1}^j r_{q,k}
&=(-1)^{j-1}j\prod_{m=1}^{j-1}\frac{k-1-m}{k+m},\\
&\qquad j\in\{2,\ldots,k\},\quad k\in\{2,\ldots,n\}.
\end{aligned}
\label{eq:sum}
\end{equation}

\noindent\textbf{Base Case.}
Let $j=2$. Then
\begin{align*}
S_{2,k}
&= r_{1,k}+r_{2,k}
= 1 - 3\frac{k-1}{k+1}
= -2\,\frac{k-2}{k+1}.
\end{align*}
This agrees with \Cref{eq:sum} for $j=2$, completing the base case.

\noindent\textbf{Inductive hypothesis:}
Assume the statement holds for $j=i$, namely
\[
S_{i,k}
=
(-1)^{i-1} i \prod_{m=1}^{i-1}\frac{k-1-m}{k+m}.
\]

\noindent\textbf{Inductive step:}
We aim to show that the statement holds for $j=i+1$, that is,
\[
S_{i+1,k}
=
(-1)^i (i+1) \prod_{m=1}^{i}\frac{k-1-m}{k+m}.
\]

Recall that
\begin{align*}
r_{i+1,k}
&=(-1)^{i+1-1}(2i+2-1)
\frac{(k)_{i+1}}{(k-1+i+1)_{i+1}}\\*
&=(-1)^i(2i+1)\prod_{m=1}^{i}\frac{k-m}{k+m}.
\end{align*}

Then
\begin{align*}
S_{i+1,k}
&=S_{i,k}+r_{i+1,k}\\
&=(-1)^{i-1}i\prod_{m=1}^{i-1}\frac{k-1-m}{k+m}
 +(-1)^i(2i+1)\prod_{m=1}^{i}\frac{k-m}{k+m}\\
&=\frac{(-1)^i}{\prod_{m=1}^i(k+m)}
\left[\begin{aligned}
&-i(k+i)\left(\prod_{m=1}^{i-1}(k-1-m)\right)\\
&\quad +(2i+1)\left(\prod_{m=1}^{i}(k-m)\right)
\end{aligned}\right]\\
&=\frac{(-1)^i}{\prod_{m=1}^i(k+m)}
(i+1)\left(\prod_{m=1}^{i}(k-1-m)\right)\\*
&=(-1)^i(i+1)\prod_{m=1}^{i}\frac{k-1-m}{k+m}.
\end{align*}
This completes the inductive step. \qed

In particular, taking $j=k$ in \Cref{eq:sum} yields
\begin{align}
\sum_{q=1}^k r_{q,k} = 0 \quad  (k = 2,\ldots,n). \label{eq:zerosum}
\end{align}

Next, we establish the central step of the proof, showing that all derivatives of order up to $k-2$ of the function $f_k$ vanish at $1$, namely,
\begin{align}
f_k^{(s)}(1)=0, \qquad \forall\, s=0,1,\ldots,k-2.\label{eq:stat2}
\end{align}
In general, for $s=0,\ldots,k-2$, we have
\begin{align*}
f_k^{(s)}(x)
&=
\sum_{j=1}^k
r_{j,k}
\prod_{m=0}^{s-1}\!\left\{\binom{j}{2}-m\right\}
x^{\binom{j}{2}-s}
\,\mathbbm{1}\!\left\{\binom{j}{2}\ge s\right\} \\
&=
\sum_{\substack{1\le j\le k\\* \binom{j}{2}\ge s}}
r_{j,k}
\prod_{m=0}^{s-1}\!\left\{\binom{j}{2}-m\right\}
x^{\binom{j}{2}-s}.
\end{align*}
Here we adopt the convention that
\[
\prod_{m=0}^{-1}\!\left\{\binom{j}{2}-m\right\}=1,
\]
which corresponds to the case $s=0$.
By \Cref{eq:zerosum}, it follows that $f_k^{(0)}(1)=0$ for all $k=2,\ldots,n$.
We now proceed with the induction step.

\noindent\textbf{Base Case:}
$f_2^{(0)}(1)=0$ and $f_3^{(0)}(1)=0$.

\noindent\textbf{Inductive hypothesis:}
Assume that $f_k^{(s)}(1)=0$ and $f_{k+1}^{(s)}(1)=0$, that is,
\begin{align}
\sum_{\substack{1\le j\le k \\ \binom{j}{2} \ge s}}
r_{j,k}
\prod_{m=0}^{s-1} \left\{\binom{j}{2} - m\right\}
= 0,
\label{eq:ks}
\end{align}
\begin{align}
\sum_{\substack{1\le j\le k+1 \\ \binom{j}{2} \ge s}}
r_{j,k+1}
\prod_{m=0}^{s-1} \left\{\binom{j}{2} - m\right\}
= 0.
\label{eq:kplusones}
\end{align}
\noindent\textbf{Inductive step:}
We aim to show that $f_{k+1}^{(s+1)}(1)=0$, that is,
\[
\sum_{\substack{1\le j\le k+1 \\ \binom{j}{2} \ge s+1}}
r_{j,k+1}
\prod_{m=0}^{s} \left\{\binom{j}{2} - m\right\}
= 0.
\]
Denote
\[
j^\ast := \min\left\{\, j \in \{1,\ldots,k+1\} : \binom{j}{2} \ge s \,\right\}.
\]
If $\binom{j^\ast}{2} > s$, then
\begin{align*}
f_{k+1}^{(s+1)}(1)
&=
\sum_{j=j^\ast}^{k+1}
r_{j,k+1}
\prod_{m=0}^{s} \left\{\binom{j}{2} - m\right\}.
\end{align*}
If $\binom{j^\ast}{2}=s$, then
\begin{align*}
f_{k+1}^{(s+1)}(1)
&=
\sum_{j=j^\ast+1}^{k+1}
r_{j,k+1}
\prod_{m=0}^{s} \left\{\binom{j}{2} - m\right\} \\
&=
r_{j^\ast,k+1}
\prod_{m=0}^{s} \left\{\binom{j^\ast}{2} - m\right\}
+
\sum_{j=j^\ast+1}^{k+1}
r_{j,k+1}
\prod_{m=0}^{s} \left\{\binom{j}{2} - m\right\} \\*
&=
\sum_{j=j^\ast}^{k+1}
r_{j,k+1}
\prod_{m=0}^{s} \left\{\binom{j}{2} - m\right\},
\end{align*}
where the first term vanishes since $\binom{j^\ast}{2}-s=0$.
In summary, it suffices to show that
\[
\sum_{j=j^\ast}^{k+1}
r_{j,k+1}
\prod_{m=0}^{s} \left\{\binom{j}{2} - m\right\}
= 0.
\]
\begin{align*}
&\sum_{j=j^\ast}^{k+1}r_{j,k+1}
\left\{\prod_{m=0}^{s}\left(\binom{j}{2}-m\right)\right\}\\
&\quad=\sum_{j=j^\ast}^{k}r_{j,k+1}
\left\{\prod_{m=0}^{s}\left(\binom{j}{2}-m\right)\right\}
\\
&\qquad\quad+r_{k+1,k+1}\prod_{m=0}^{s}\left\{\binom{k+1}{2}-m\right\}\\
&\quad=\sum_{j=j^\ast}^{k}r_{j,k+1}
\left\{\prod_{m=0}^{s-1}\left(\binom{j}{2}-m\right)\right\}
\left\{\binom{j}{2}-s\right\}\\
&\qquad\quad+r_{k+1,k+1}
\left\{\prod_{m=0}^{s-1}\left(\binom{k+1}{2}-m\right)\right\}
\left\{\binom{k+1}{2}-s\right\}\\
&\quad=\sum_{j=j^\ast}^{k}r_{j,k+1}
\left\{\prod_{m=0}^{s-1}\left(\binom{j}{2}-m\right)\right\}
\left\{\binom{j}{2}-s\right\}\\
&\qquad\quad-\sum_{j=j^\ast}^{k}r_{j,k+1}
\left\{\prod_{m=0}^{s-1}\left(\binom{j}{2}-m\right)\right\}
\left\{\binom{k+1}{2}-s\right\}\\
&\quad=\sum_{j=j^\ast}^{k}r_{j,k+1}
\left\{\prod_{m=0}^{s-1}\left(\binom{j}{2}-m\right)\right\}
\left\{\binom{j}{2}-\binom{k+1}{2}\right\}\\
&\quad=-\binom{k+1}{2}\sum_{j=j^\ast}^{k}r_{j,k}
\left\{\prod_{m=0}^{s-1}\left(\binom{j}{2}-m\right)\right\}\\*
&\quad=0.
\end{align*}
To eliminate the term with coefficient $r_{k+1,k+1}$, we apply \Cref{eq:kplusones}.
To replace $r_{j,k+1}$ by $r_{j,k}$, we use the identity
\[
r_{j,k+1}
=
r_{j,k}
\frac{\binom{k+1}{2}}{\binom{k+1}{2}-\binom{j}{2}},
\]
which follows directly from the definition of $r_{j,k}$ in \Cref{prop:1}.
The final equality follows from \Cref{eq:ks}.
This completes the induction and thereby establishes the statement in \Cref{eq:stat2}.

By Taylor expansion about $x=1$, we have
\begin{align*}
f_k(x)
&=
\sum_{s=0}^{\binom{k}{2}}
\frac{f_k^{(s)}(1)}{s!}\,(x-1)^s .
\end{align*}
According to \Cref{eq:stat2}, all derivatives of order up to $k-2$ vanish at $x=1$, that is,
$f_k^{(s)}(1)=0$ for $s=0,1,\ldots,k-2$. Therefore, the Taylor expansion of $f_k(x)$ about $x=1$ starts at order $(x-1)^{k-1}$, and hence $f_k(x)$ is divisible by $(1-x)^{k-1}$.

Finally, by applying the polynomial long-division algorithm in ascending degree order, we obtain explicit expressions for the coefficients $a_{i,k}$.
\end{proof}
\vspace{.2in}
\begin{restateef}
\Statementef
\end{restateef}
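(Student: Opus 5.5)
The plan is to read both polynomials probabilistically and deduce the inequality from a coupling of Kingman coalescents, instead of working with the coefficients directly. Write $f_k(x)\coloneqq\sum_{j=1}^k r_{j,k}x^{\binom{j}{2}}$ and take $N_e\equiv 1$. By Proposition~\ref{prop:1}, for $x=e^{-s}$ with $s\in(0,\infty]$, $f_k(x)$ is the probability that a coalescent started with $k$ lineages reaches its most recent common ancestor within time $s$. As $s$ ranges over $(0,\infty]$, $x$ ranges over $[0,1)$; the endpoint $x=0$ corresponds to $s=\infty$, where both sides of the inequality equal $1$. Since $r_{1,2}=1$ and $r_{2,2}=-1$, we have $f_2(x)=1-x$, which is the probability that a given pair of lineages has coalesced by time $s$. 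Each $f_k(x)$ is strictly positive on $[0,1)$, because it is the probability of an event with positive probability. Hence the claim is equivalent to
\[
f_k(x)\;\ge\;(1-x)\,f_{k-1}(x)\qquad (x\in[0,1)),
\]
that is,
\[
\text{pr}\bigl(T_2^{(k)}\le s\bigr)\;\ge\;\text{pr}\bigl(T_2^{(k-1)}\le s\bigr)\,\bigl(1-e^{-s}\bigr),
\]
where $T_2^{(m)}$ denotes the time to the most recent common ancestor of $m$ lineages.

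To prove this, I would use the sampling consistency of Kingman's coalescent. Run a coalescent on $k$ labelled lineages. The genealogy restricted to lineages $1,\ldots,k-1$ is itself a coalescent with $k-1$ lineages, so its time to the most recent common ancestor, $T'$, has the law of $T_2^{(k-1)}$. Let $\sigma$ be the time at which lineage $k$ first merges into the ancestry of the subsample. Once both $T'$ and $\sigma$ have occurred, all $k$ lineages share a single ancestor. Therefore $T_2^{(k)}\le\max(T',\sigma)$, and
\[
\{T'\le s\}\cap\{\sigma\le s\}\subseteq\{T_2^{(k)}\le s\}.
\]

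The key step is to condition on the entire genealogical history $\mathcal G$ of the first $k-1$ lineages. Given $\mathcal G$, lineage $k$ merges into that ancestry at rate $L(t)\ge 1$, where $L(t)$ is the number of ancestral lineages of the subsample present at time $t$. This follows from the pairwise rate-one description of the coalescent, in which lineage $k$ coalesces with each existing lineage at rate one. Consequently
\[
\text{pr}(\sigma\le s\mid\mathcal G)=1-\exp\Bigl\{-\int_0^s L(t)\,dt\Bigr\}\ge 1-e^{-s}.
\]
Since $\{T'\le s\}$ is $\mathcal G$-measurable, we obtain
\[
\text{pr}\bigl(T_2^{(k)}\le s\bigr)\ge \mathbb E\bigl[\mathbbm 1\{T'\le s\}\,\text{pr}(\sigma\le s\mid\mathcal G)\bigr]\ge (1-e^{-s})\,\text{pr}(T'\le s).
\]
This is exactly the required inequality. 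Dividing by $f_k(x)>0$ gives the statement for every $k\ge 3$.

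I expect the main obstacle to be making the conditional-rate claim rigorous. The clean route is an explicit construction: attach independent rate-one Poisson clocks to every unordered pair of lineages, where a clock ringing merges the two current ancestral lineages. Under this construction the clocks involving lineage $k$ are independent of those among lineages $1,\ldots,k-1$. The subsample's genealogy is then a function of the latter clocks alone, and lineage $k$'s merging hazard is the number of subsample ancestors present, which is at least $1$. If this construction becomes awkward, a fallback is purely analytic. One would set up a Kolmogorov backward/renewal equation in $s$ for $D_k(s)\coloneqq f_k(e^{-s})-(1-e^{-s})f_{k-1}(e^{-s})$, with initial value $D_k(0)=0$, and show by induction on $k$ that $D_k\ge0$. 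This is plausible because the death rates $\binom{k}{2}$ increase in $k$, though I expect this route to be messier than the coupling.
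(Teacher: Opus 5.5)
Your proof is correct, and it takes a genuinely different route from the paper's. The paper argues purely algebraically. It starts from the factorization $f_k(x)=(1-x)^{k-1}g_k(x)$ of Lemma~\ref{lemma:4} and proves the identity $f_k-f_{k-1}=x f_k'/\binom{k}{2}$, which is the backward equation of the death chain written in the variable $x=e^{-s}$. It turns this into a recursion expressing $a_{i,k}$ as a nonnegative combination of $a_{i-1,k}$ and $a_{i,k-1}$, with weight $\binom{k}{2}/\{\binom{k}{2}-i\}\ge 1$ on $a_{i,k-1}$. Induction then gives $a_{i,k}\ge 0$ and $a_{i,k}\ge a_{i,k-1}$, hence $g_{k-1}\le g_k$ and the bound. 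You bypass Lemma~\ref{lemma:4} and all coefficient bookkeeping. Via Proposition~\ref{prop:1} with $N_e\equiv 1$, the claim becomes $f_k\ge f_2\,f_{k-1}$ (in the paper's notation, $h_k\ge h_2\,h_{k-1}$), and sampling consistency plus the conditional hazard $L(t)\ge 1$ of the added lineage proves it. This explains where the dominating intensity $\lambda_k^U$ comes from. It also carries over verbatim to time-varying $N_e$, where the hazard becomes $L(t)/N_e(t)\ge 1/N_e(t)$, so you get $\lambda_k^B\le\lambda_k^U$ without the substitution step. What the paper's route buys, and yours does not, is coefficientwise information: nonnegativity of the $a_{i,k}$, which the paper uses for its numerical-stability remark and (through $g_k(1)>0$) for the divergence of $\lambda_k^B$ at $\tau$, and the domination $g_{k-1}\le g_k$ as polynomials.

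One detail in your rigorous construction needs fixing. Suppose the rate-one clocks are indexed by pairs of \emph{sample} labels and any ringing clock merges the current ancestors. Then two blocks $A$ and $B$ merge at rate $|A|\,|B|$, which is not Kingman's coalescent, and lineage $k$'s hazard would be $k-1$ rather than $L(t)$. The standard fix is to let clock $\{i,j\}$ act only when $i$ and $j$ are the smallest labels of their current blocks. Because $k$ is the largest label, the partition of $\{1,\ldots,k-1\}$ is then driven by the clocks among $1,\ldots,k-1$ alone. While $k$ is a singleton, its merging hazard is exactly the number $L(t)$ of subsample blocks, which is what your conditioning step uses. Alternatively, you can skip the construction altogether. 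Observe that the partition chain on $\{1,\ldots,k\}$ projects to an autonomous chain on $\{1,\ldots,k-1\}$. Conditional on the projected path, the indicator that $k$ has merged is then a one-jump chain with rate $L(t)$.
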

\vspace{.1in}
\begin{proof}
We denote the two polynomials in \Cref{lemma:4} by
$$
f_k(x) \coloneqq \sum_{j=1}^k r_{j,k}\, x^{\binom{j}{2}}, \quad
g_k(x) \coloneqq \sum_{i=0}^{\binom{k-1}{2}} a_{i,k}\, x^i.$$ 
Our goal is to show that
\[
\frac{f_{k-1}(x)}{f_k(x)} \le \frac{1}{1-x}.
\]
By applying \Cref{lemma:4}, it suffices to prove that
\(g_{k-1}(x) \le g_k(x)\).
We prove this result in four steps by analyzing the relationship
between \(a_{i,k}\) and \(a_{i,k-1}\).

\noindent(i) We first state an intermediate conclusion about $f_k(x)$:

\begin{align}
f_k(x)-f_{k-1}(x)=\cfrac{xf_k^{'}(x)}{\binom{k}{2}}\quad (k=3,4,\ldots,n)\label{eq:poly}
\end{align}
Recall that
\begin{align*}
r_{j,k}
&\coloneqq
(-1)^{j-1}(2j-1)\frac{(k)_{j}}{(k-1+j)_{j}} \\*
&=
(-1)^{j-1}(2j-1)
\prod_{m=1}^{j-1} \frac{k-m}{k+m}.
\end{align*}
In this section, we use the above simplified expression for $r_{j,k}$.
As before, we use the convention that, when \(j=1\),
\[
\prod_{m=1}^{0} \frac{k-m}{k+m} = 1.
\]
We prove \Cref{eq:poly} as follows.
\begin{align*}
\text{LHS}
&= f_k(x)-f_{k-1}(x) \\
&= \sum_{j=1}^k r_{j,k}\, x^{\binom{j}{2}}
   - \sum_{j=1}^{k-1} r_{j,k-1}\, x^{\binom{j}{2}} \\
&= \sum_{j=2}^{k-1} \bigl(r_{j,k}-r_{j,k-1}\bigr)\, x^{\binom{j}{2}}
   + r_{k,k}\, x^{\binom{k}{2}} \\
&= \sum_{j=2}^{k-1} (-1)^{j-1}(2j-1)\, x^{\binom{j}{2}}
\left(
\prod_{m=1}^{j-1} \frac{k-m}{k+m}
-
\prod_{m=1}^{j-1} \frac{k-1-m}{k-1+m}
\right) \\*
&\quad+ (-1)^{k-1}(2k-1)\, x^{\binom{k}{2}}
\prod_{m=1}^{k-1} \frac{k-m}{k+m} \\
&= \sum_{j=2}^{k-1} (-1)^{j-1}(2j-1)\, x^{\binom{j}{2}}
\left(
1-\frac{(k-j)(k+j-1)}{k(k-1)}
\right)
\prod_{m=1}^{j-1} \frac{k-m}{k+m} \\*
&\quad+ (-1)^{k-1}(2k-1)\, x^{\binom{k}{2}}
\prod_{m=1}^{k-1} \frac{k-m}{k+m} \\
&= \sum_{j=2}^{k-1} (-1)^{j-1}(2j-1)\, x^{\binom{j}{2}}
\frac{j(j-1)}{k(k-1)}
\prod_{m=1}^{j-1} \frac{k-m}{k+m}\\*
&\quad+ (-1)^{k-1}(2k-1)\, x^{\binom{k}{2}}
\prod_{m=1}^{k-1} \frac{k-m}{k+m} \\
&= \frac{1}{\binom{k}{2}}
\sum_{j=2}^{k} (-1)^{j-1}(2j-1)\,
\binom{j}{2}\, x^{\binom{j}{2}}
\prod_{m=1}^{j-1} \frac{k-m}{k+m}.\\
\text{RHS}
&= \frac{x f_k'(x)}{\binom{k}{2}} \\*
&= \frac{x}{\binom{k}{2}}
\left[
\sum_{j=1}^{k} (-1)^{j-1}(2j-1)\,
x^{\binom{j}{2}}
\prod_{m=1}^{j-1} \frac{k-m}{k+m}
\right]' \\*
&= \frac{1}{\binom{k}{2}}
\sum_{j=1}^{k} (-1)^{j-1}(2j-1)\,
\binom{j}{2}\, x^{\binom{j}{2}}
\prod_{m=1}^{j-1} \frac{k-m}{k+m}.
\end{align*}

Hence \Cref{eq:poly} is proved.

\vspace{.1in}
\noindent(ii)
Applying \Cref{lemma:4}, this intermediate result can be expressed in terms of
\(a_{i,k}\), \(a_{i-1,k}\), and \(a_{i,k-1}\).
For this calculation, write
\[
\kappa=\binom{k}{2},\qquad d=\binom{k-1}{2},\qquad
 e=\binom{k-2}{2}.
\]
For $x\in\mathbb{R}$,
\begin{align*}
&(1-x)^{k-1}g_k(x)-(1-x)^{k-2}g_{k-1}(x)\\*
&\qquad=\frac{x}{\kappa}\left[(1-x)^{k-1}g_k(x)\right]'\\*
&\qquad=\frac{x}{\kappa}
\left[(1-x)^{k-1}g_k'(x)-(k-1)(1-x)^{k-2}g_k(x)\right].
\end{align*}
For $x\ne1$, cancellation and rearrangement give
\begin{align*}
&\kappa(1-x)g_k(x)-\kappa g_{k-1}(x)\\*
&\qquad=x(1-x)g_k'(x)-(k-1)xg_k(x),\\
&\left[\kappa(1-x)+(k-1)x\right]g_k(x)-\kappa g_{k-1}(x)\\*
&\qquad=x(1-x)g_k'(x).
\end{align*}
Substituting the polynomial expansions yields
\begin{align*}
&\left[\kappa(1-x)+(k-1)x\right]\sum_{i=0}^{d}a_{i,k}x^i
 -\kappa\sum_{i=0}^{e}a_{i,k-1}x^i\\*
&\qquad=x(1-x)\sum_{i=1}^{d}i a_{i,k}x^{i-1}.
\end{align*}
Equivalently, for $x\ne1$,
\begin{align*}
0&=\kappa\sum_{i=0}^{d}a_{i,k}x^i(1-x)
 +(k-1)\sum_{i=0}^{d}a_{i,k}x^{i+1}\\*
&\quad-\kappa\sum_{i=0}^{e}a_{i,k-1}x^i
 -\sum_{i=1}^{d}i a_{i,k}x^i(1-x)\\
&=\sum_{i=0}^{d}(\kappa-i)a_{i,k}x^i(1-x)
 +(k-1)\sum_{i=0}^{d}a_{i,k}x^{i+1}
 -\kappa\sum_{i=0}^{e}a_{i,k-1}x^i\\
&=\sum_{i=0}^{d}(\kappa-i)a_{i,k}x^i
 -\sum_{i=0}^{d}(\kappa-i)a_{i,k}x^{i+1}\\*
&\quad +(k-1)\sum_{i=0}^{d}a_{i,k}x^{i+1}
 -\kappa\sum_{i=0}^{e}a_{i,k-1}x^i\\
&=\sum_{i=0}^{d}(\kappa-i)a_{i,k}x^i
 +\sum_{i=0}^{d}(k-1-\kappa+i)a_{i,k}x^{i+1}
 -\kappa\sum_{i=0}^{e}a_{i,k-1}x^i\\
&=\sum_{i=0}^{d}(\kappa-i)a_{i,k}x^i
 +\sum_{i=1}^{d+1}(k-1-\kappa+i-1)a_{i-1,k}x^i
 -\kappa\sum_{i=0}^{e}a_{i,k-1}x^i\\
&=\sum_{i=0}^{d}(\kappa-i)a_{i,k}x^i
 +\sum_{i=1}^{d}(-d+i-1)a_{i-1,k}x^i
 -\kappa\sum_{i=0}^{e}a_{i,k-1}x^i\\
&=\kappa(a_{0,k}-a_{0,k-1})\\*
&\quad+\sum_{i=1}^{e}
\left[(\kappa-i)a_{i,k}+(-d+i-1)a_{i-1,k}
 -\kappa a_{i,k-1}\right]x^i\\*
&\quad+\sum_{i=e+1}^{d}
\left[(\kappa-i)a_{i,k}+(-d+i-1)a_{i-1,k}\right]x^i.
\end{align*}

\noindent By setting all coefficients of $x^i$ to be zero,
we obtain the following general equation for the $a_{i,k}$’s:
\begin{equation}
\begin{aligned}
a_{i,k}
&=\frac{\left[\binom{k-1}{2}-i+1\right]a_{i-1,k}
+\binom{k}{2}a_{i,k-1}}{\binom{k}{2}-i},\\
&\qquad k=3,\ldots,n,\quad i=0,\ldots,\binom{k-1}{2}.
\end{aligned}
\label{eq:a2}
\end{equation}
The definition of $a_{i,k}$ in \Cref{lemma:4} gives
$$
a_{-1,k}=0,
\quad
a_{i,k-1}=0
\quad
\left(i=\binom{k-2}{2}+1,\ldots,\binom{k-1}{2}\right).
$$

\noindent (iii) Next, we prove by induction the following statement,
denoted by $H(k,i)$:
\[
a_{i,k}\geq 0,
\quad
\left(k=2,\ldots,n;\
i=0,\ldots,\binom{k-1}{2}\right).
\]

\noindent\textbf{Base Case:}
Based on the definition of $a_{i,k}$ in \Cref{lemma:4}, we have $a_{0,2}=1$.
Thus, $H(2,0)$ holds. Furthermore, by definition, $H(3,-1)$ holds.

\noindent\textbf{Inductive hypothesis:}
Assume that $H(q, r-1)$ and $H(q-1, r)$ hold for
$$q= 3,\ldots,n,\quad r=0,\ldots,\binom{q-1}{2}.$$
That is,
\[
a_{r-1,q}\geq 0,
\qquad
a_{r,q-1}\geq 0\quad \left(q= 3,\ldots,n;\ r=0,\ldots,\binom{q-1}{2}\right).
\]

\noindent\textbf{Inductive step:}
Applying \Cref{eq:a2}, we obtain
\[
a_{r,q}
=
\cfrac{
\left[\binom{q-1}{2}-r+1\right]a_{r-1,q}
+\binom{q}{2}a_{r,q-1}
}{
\binom{q}{2}-r
}.
\]
Since $$\binom{q-1}{2}-r+1>0\quad\text{and}\quad\binom{q}{2}-r>0,$$  it follows that
$a_{r,q}\geq 0$.

\noindent Hence, by mathematical induction, $H(k,i)$ holds for
$$k=2,\ldots,n,\quad i=0,\ldots,\binom{k-1}{2}.$$

\noindent (iv)
We observe that
\[
\binom{k-1}{2}-i+1>0,
\qquad
\binom{k}{2}\geq\binom{k}{2}-i\quad \left(k=3,\ldots,n;\ i=0,\ldots,\binom{k-2}{2}\right).
\]
By $H(k,i)$, we have
\[
a_{i-1,k}\geq 0,
\qquad
a_{i,k-1}\geq 0.
\]
\noindent Therefore, the first term in \Cref{eq:a2} is nonnegative and the second
term is no less than $a_{i,k-1}$, so that
\[
a_{i,k}\geq a_{i,k-1}.
\]
We have shown that
\[
a_{i,k}\geq a_{i,k-1}\quad
\left(k=3,\ldots,n;\ i=0,\ldots,\binom{k-2}{2}\right),
\]
\[
a_{i,k}\geq 0
\quad
\left(k=3,\ldots,n;\ i=\binom{k-2}{2}+1,\ldots,\binom{k-1}{2}\right).
\]
\noindent Thus, we conclude that
\begin{align*}
&\cfrac{g_{k-1}(x)}{g_k(x)}
=
\cfrac{\sum_{i=0}^{\binom{k-2}{2}} a_{i,k-1} x^i}
{\sum_{i=0}^{\binom{k-1}{2}} a_{i,k} x^i}
\leq 1,
\quad x\in[0,\infty),\\
&\cfrac{(1-x)^{k-2}}{(1-x)^{k-1}}
\cfrac{g_{k-1}(x)}{g_k(x)}
\leq \cfrac{1}{1-x},
\quad x\in[0,1),\\*
&\cfrac{f_{k-1}(x)}{f_k(x)}
\leq \cfrac{1}{1-x},
\quad x\in[0,1).
\end{align*}

Now \Cref{prop:2} is proved.

\end{proof}

\section{Connection between the coefficients
\texorpdfstring{$r_{j,n}$}{r(j,n)}
and the Touchard--Riordan numbers
\texorpdfstring{$t_{n,j}$}{t(n,j)} }\label{app:appendix3}

This appendix discusses the relationship between the coefficients $r_{j,n}$ and the Touchard--Riordan numbers $t_{n,j}$; see \citet{riordan1975distribution} and OEIS sequence A067311 \citep{OEIS_A067311}.

Equation~(1) in \citet{riordan1975distribution} states an identity similar to
\Cref{lemma:4}:
\begin{align*}
(1-x)^k \sum_{i=0}^{\binom{k}{2}} T_{k,i} x^i
=
\sum_{j=0}^{k} (-1)^j t_{k,j} x^{\binom{j+1}{2}} .
\end{align*}
Here,
\[
t_{k,j}
=
\binom{2k}{k-j}-\binom{2k}{k-j-1},
\]
and
\begin{align*}
T_{k,i}
&=\sum_{j=1}^{j^*}(-1)^{j-1}
\binom{i-\binom{j}{2}+k-1}{i-\binom{j}{2}}t_{k,j-1},\\*
j^*
&=\operatorname*{argmax}_{\substack{j:\,\binom{j}{2}\le i}}j.
\end{align*}
Comparing the expressions gives the correspondence between the
Touchard--Riordan numbers \(t_{k,j}\), \(T_{k,i}\) and the coefficients
\(r_{j,k}\), \(a_{i,k}\) in \Cref{lemma:4} as
$$r_{j,k+1}
=
(-1)^{j-1}\frac{t_{k,j-1}}{C_k},\quad
a_{i,k+1}
=
\frac{T_{k,i}}{C_k},$$

where the Catalan number is given by
\[
C_k
=
\binom{2k}{k}-\binom{2k}{k+1}.
\]
While \citet{riordan1975distribution} derived this identity via the
distribution of chord crossings in a circle, our proof of
\Cref{lemma:4} provides an alternative algebraic derivation of the
same identity.

\section{Conditional intensity derivation}\label{appendix4}

For $k=2,\ldots,n$, write $u=t_{k+1}$ and take $u\leq t<\tau$.
Recall that $h_k(t)$ is the probability of reaching a single lineage by $\tau$
when there are $k$ lineages at time $t$. Conditional on $T_k>t$ and
$T_{k+1}=u$, the process still has $k$ lineages at time $t$. The Markov property
therefore gives
\[
\text{pr}(T_2\leq\tau\mid T_k>t,T_{k+1}=u)=h_k(t).
\]
Since the probability of no coalescence between $u$ and $t$ under the standard
coalescent is $\exp[-\binom{k}{2}\{\Lambda(t)-\Lambda(u)\}]$, the bounded
conditional survival function is
\begin{align*}
\text{pr}(T_k>t\mid T_{k+1}=u,T_2\leq\tau)
&=\frac{\text{pr}(T_k>t,T_2\leq\tau\mid T_{k+1}=u)}{h_k(u)}\\*
&=\exp\!\left[-\binom{k}{2}\{\Lambda(t)-\Lambda(u)\}\right]
\frac{h_k(t)}{h_k(u)}.
\end{align*}
After a coalescence at time $t$, there are $k-1$ lineages, so the conditional
density is
\[
f^B(t\mid u)=\frac{\binom{k}{2}}{N_e(t)}
\exp\!\left[-\binom{k}{2}\{\Lambda(t)-\Lambda(u)\}\right]
\frac{h_{k-1}(t)}{h_k(u)}.
\]
Dividing this density by the conditional survival function
\citep{rasmussen2018lecture} yields
\[
\lambda_k^B(t)
=\frac{\binom{k}{2}}{N_e(t)}\frac{h_{k-1}(t)}{h_k(t)}
=\frac{\binom{k}{2}}{N_e(t)}g_k(t,t,\tau),
\]
which establishes \Cref{eq:intensity}. Taking the negative logarithm of the
conditional survival function gives
\[
\int_u^t\lambda_k^B(s)\,ds
=\binom{k}{2}\{\Lambda(t)-\Lambda(u)\}
+\log\!\left\{\frac{h_k(u)}{h_k(t)}\right\},
\]
and hence \Cref{eq:boundedcum}. These expressions hold for $k=2$ with
$h_1(t)=1$ and $h_2(t)=1-\exp\{\Lambda(t)-\Lambda(\tau)\}$.

Next, we show that
\[
\lim_{t \to \tau} \lambda^B_{k}(t)=\infty .
\]
Again, denote \(x = e^{\Lambda(t)-\Lambda(\tau)}\); then \(\lim_{t\to\tau} x = 1\).
We have
\begin{align*}
\lim_{t \to \tau} \lambda^B_{k}(t)
&= \lim_{t \to \tau}
\frac{\binom{k}{2}}{N_e(t)}
\frac{\sum_{j=1}^{k-1} r_{j,k-1} x^{\binom{j}{2}}}
     {\sum_{j=1}^{k} r_{j,k} x^{\binom{j}{2}}} \\
&= \lim_{t \to \tau}
\frac{\binom{k}{2}}{N_e(t)}
\frac{1}{1-x}
\frac{\sum_{i=0}^{\binom{k-2}{2}} a_{i,k-1} x^i}
     {\sum_{i=0}^{\binom{k-1}{2}} a_{i,k} x^i} \\*
&= \infty .
\end{align*}
The second-to-last line follows from \Cref{lemma:4}. 

\section{Algorithm 2}\label{app:appendix5}
Assume $L \leq \frac{1}{N_e(t)}\leq M\quad \forall t \in [0,\tau]$. Then $$\frac{\binom{k}{2}}{N_e(t)}\cdot\cfrac{1}{1-\exp\left\{\Lambda(t)-\Lambda(\tau)\right\}}\leq \frac{\binom{k}{2}M}{1-\exp(-L(\tau-t))}.$$ We construct a new upper bound for $\lambda_k^B(t)$, denoted by $\lambda_k^U(t)$, as follows:
$$\lambda_k^U(t) \coloneqq \cfrac{\binom{k}{2}M}{1-\exp\{-L(\tau-t)\}}.$$
The two modified steps are as follows.

\begin{description}
\item[Step 1 (Inverse transformation).]
The integrated upper intensity is
\[
\Lambda_k^U(t)=\int_0^t\lambda_k^U(s)\,ds
=\frac{\binom{k}{2}M}{L}
\log\!\left\{\frac{\exp(L\tau)-1}{\exp(L(\tau-t))-1}\right\}.
\]
Given $t_{\mathrm{old}}$, we simulate $W\sim\mathrm{Exp}(1)$ and solve
\[
W=\Lambda_k^U(t_{\mathrm{new}})-\Lambda_k^U(t_{\mathrm{old}}).
\]
This gives
\begin{align*}
t_{\mathrm{new}}
&=\tau+\frac{W}{\binom{k}{2}M}\\*
&\quad-\frac{1}{L}\log\!\left[
\exp\!\left\{L(\tau-t_{\mathrm{old}})\right\}
+\exp\!\left\{\frac{WL}{\binom{k}{2}M}\right\}-1
\right].
\end{align*}

\item[Step 2 (Thinning).]
We accept $t_{\mathrm{new}}$ as $t_k$ with probability  $$\frac{g_k(t_{\text{new}},t_{\text{new}}, \tau)\left(1-\exp\left\{L(t_{\text{new}}-\tau)\right\}\right)}{M  N_e(t_{\text{new}})}.$$
If the proposal is rejected, we set $t_{\mathrm{old}} = t_{\mathrm{new}}$
and repeat Steps~1–2.
\end{description}

\begin{algorithm}
\caption{Bounded coalescent simulation with evaluable $N_e$ and $\Lambda$.}
\label{algo:3}
\begin{algorithmic}[1]
\REQUIRE $k = n$, $t = 0$, $N_e(t)$, $\Lambda(t)$.
\ENSURE $\{t_k\}_{k=n}^2$.
\REPEAT
\STATE Sample $E \sim \mathrm{Exp}\!\left(\binom{k}{2}\right)$ and $U \sim \mathrm{Unif}(0,1)$.
\STATE $t=\tau+\frac{E}{M}-\frac{1}{L}\log\!\left(\exp\!\left(\frac{LE}{M}\right)+\exp\!\bigl(L(\tau-t)\bigr)-1\right)$
\IF{$U \le \lambda_k^B(t)/\lambda_k^U(t)$}
\STATE $t_k \leftarrow t$, \quad $k \leftarrow k-1$.
\ENDIF
\UNTIL{$k < 2$}
\end{algorithmic}
\end{algorithm}

\section{Boundary-corrected Brownian motion}\label{app:brownian}
We adopt a different strategy from that of
\citet{rue2005gaussian,tang2026exact} to construct the boundary correction,
by introducing a Gaussian random initial value.

Specifically, we place the prior
$\lambda(0)\mid\theta \sim \mathcal{N}\left(0,\frac{\sigma^2}{\theta}\right)$,
where $\sigma^2$ is finite and is set by default to
$\sigma^2=\alpha/\beta$. Since $\alpha/\beta$ is the prior mean of
$\theta$, this choice keeps the conditional prior variance of $\lambda(0)$
on an order-one scale when $\theta$ is near its prior mean. The appropriate scale of $1/N_e(t)$ depends on the time units and the application.

Before truncation, $\boldsymbol{\lambda}\mid \lambda(0),\theta
\sim \mathcal{N}\left(\lambda(0)\boldsymbol{l},C/\theta\right)$. We can therefore write
$\boldsymbol{\lambda}
=
\mathbf{Z}+\lambda(0)\boldsymbol{l}$, where
$\mathbf{Z}\sim\mathcal{N}\left(0,C/\theta\right)$.
Combining this representation with the prior
$\lambda(0)\mid\theta\sim\mathcal{N}\left(0,\sigma^2/\theta\right)$
gives the marginal distribution
$\boldsymbol{\lambda}\mid\theta
\sim\mathcal{N}\left(0,
(C+\sigma^2\boldsymbol{l}\boldsymbol{l}^{\T})/\theta\right)$.

\section{Evaluation metrics}\label{app:appendix7}
We compute the following statistics on the regular grid
$t_j=j\tau/100$ $(j=1,\ldots,100)$, which excludes zero and includes
$\tau$, for each simulated dataset. We report the median and the 25th and
75th percentiles across 30 datasets.

\noindent\textit{Sum of squared errors at grid points.}
We sum the squared differences between the posterior median and the true
effective population size across the grid points.

\noindent\textit{Coverage at grid points.}
We calculate the proportion of grid points at which the true effective
population size lies within the corresponding pointwise 95\% equal-tailed
credible interval.

\noindent\textit{Credible interval width.}
We average the width of the pointwise 95\% equal-tailed credible intervals
across the grid points.

\section{GISAID data} \label{sec:gisaid}

All genome sequences and associated metadata supporting the findings of this
study can be accessed through the persistent digital object identifier
\url{https://doi.org/10.55876/gis8.260825mx}.

In addition to the minted DOI, GISAID also communicates the aggregation of
GISAID accession numbers (EPI\_USL\_IDs) through the corresponding EPI\_SET\_260825mx identifier to facilitate both, the acknowledgment of all data
contributors and the direct retrieval of the underlying data from GISAID used in
this study.
\begin{table}[htbp]
\centering
\caption{GISAID sequence data used in this study.}
\label{table:gisaid}
\small
\renewcommand{\arraystretch}{1.25}
\setlength{\tabcolsep}{5pt}
\begin{tabular}{>{\raggedright\arraybackslash}p{3.5cm}
                >{\centering\arraybackslash}p{5.5cm}
                >{\centering\arraybackslash}p{2.0cm}
                >{\centering\arraybackslash}p{2.3cm}}
\toprule
\textbf{GISAID Identifier} &
\textbf{Digital Object Identifier} &
\textbf{Sequences} &
\textbf{Collection date} \\
\midrule
EPI\_SET\_260825mx &
\href{https://doi.org/10.55876/gis8.260825mx}{\nolinkurl{10.55876/gis8.260825mx}} &
103 &
2020-06-08 \\
\bottomrule
\end{tabular}
\end{table}

\end{document}